\documentclass[a4paper,11pt]{article}
\usepackage{pos}

\usepackage{latexsym}
\usepackage[all,knot]{xy}
\xyoption{arc}
\usepackage{tikz-cd}
\usepackage{hyperref}
\usepackage{comment}
\usepackage{cancel}

\newtheorem{theorem}{Theorem}[section]
\newtheorem{lemma}[theorem]{Lemma}
\newtheorem{proposition}[theorem]{Proposition}
\usepackage[dvipsnames]{xcolor}
\usepackage{comment}
\usepackage{cleveref}
\usepackage{tikz-feynman}
\usetikzlibrary{matrix,cd,arrows,decorations.pathmorphing,external}
\usepackage{mathtools}
\usepackage[all,knot]{xy}
\xyoption{arc}

\tikzset{
    int/.style={
        line width=1.5pt,
        line cap=round,
        dash pattern=on 0pt off 3pt
    },
    ambi/.style={
        line width=1.2pt,
        line cap=round,
        dash pattern=on 2pt off 3pt
    },
    anti/.style={decorate, draw=black,
        decoration={zigzag,segment length = 1mm, amplitude = 0.7mm}}
}

\definecolor{JanColor}{rgb}{0.07, 0.40, 0.00}

\title{Field theory of $\mathfrak{su}(n)$: the absence of non-zero scatterings}

\author*[a]{Eugenia Boffo}
\author[a]{Ján Pulmann}
\author[b]{\v L{u}bo\v{s} Ravas}

\affiliation[a]{Mathematical Institute, Faculty of Mathematics and Physics, Charles University,\\
  Sokolovsk\'{a} 83, 186 75, Praha, Czechia}

\affiliation[b]{Institute of Inorganic Chemistry, Department of Theoretical Chemistry,\\
Dúbravská cesta 9, 845 36, Bratislava, Slovakia}

\emailAdd{boffo@karlin.mff.cuni.cz}
\emailAdd{jan.pulmann@gmail.com}
\emailAdd{lubos.ravas@savba.sk}

\abstract{We inspect $\mathfrak{su}(n)$ forms, providing greater detail for $n=2,3$, as a toy model for a field theory in finite dimensions and with gauge symmetries. Relying on homological perturbation theory, we show that there are no scattering amplitudes with trivalent tree-level diagrams, except for the interaction vertex, thus extending a known argument of \cite{Cattaneo:2009deo} to arbitrary $n$. In contrast to this, we show how to obtain non-trivial higher products when transferring to a larger space of fields.}

\FullConference{Proceedings of the Corfu Summer Institute 2025 "School and Workshops on Elementary Particle Physics and Gravity" (CORFU2025)\\
27 April - 28 September, 2025\\
Corfu, Greece\\}

\tableofcontents

\begin{document}
\maketitle

\section{Introduction}

Classical and quantum field theories are the core of theoretical physics.  In most cases there is a formulation which is valid independently on whether the field equations are satisfied, the \emph{off-shell formulation} based on Lagrangian or Hamiltonian action functionals. In other circumstances we might just have access to information at the locus of the field equations, i.e.~we know the \emph{on-shell theory}. An example where this occurs is the $N=2$ superconformal field theory in dimension $4$ \cite{Argyres:1995xn} and also its higher dimensional versions in $5$ and $6$ dimensions. In these situations, we only have access to the scattering amplitudes: they are computed from a perturbative expansion around a fixed point of the action functional (if it exists!).  
Several strategies can be adopted to investigate QFTs either at the locus of the fields equations or off-shell.   

Homotopy theory and homological algebra are outstanding tools for a modern approach to QFTs, culminating in the Batalin--Vilkovisky quantization technique \cite{Batalin:1977pb}. The ultimate goal of BV, in the context of field theories whose path integral degenerates because of gauge symmetries, is to replace the classical theory by a perturbatively quantized theory that satisfies a differential equation, called the \emph{quantum master equation}. 
A modern viewpoint on BV is due to Costello and Gwilliam \cite{Costello:2021jvx} and has been further developed by many authors, of which we give our very partial list: \cite{Jurco:2018sby,Jurco:2019yfd,Jurco:2024gct}. Another remarkable article where the reader can find an excellent bridge between the homotopy language and field theories is \cite{Eager:2021wpi}. Perturbative BV relies on cyclic and quantum $L_\infty$ algebras. The notion of homotopy transfer \cite{Vallette-Loday} is often invoked to import the aforementioned structure to a quasi-isomorphic, equivalent, physical system. Although the prime object of interest is the off-shell theory, also the on-shell information can be unveiled: there is an algorithm for computing scattering amplitudes that follows from the homological perturbation lemma \cite{Saemann:Srninotes}. Similarly to the standard treatment of QFTs, diagrams are produced, the so called HPL diagrams. Even though they have a different mathematical interpretation, they are closely related to Feynman diagrams. See for example \cite{Saemann:2020oyz} for matching of the symmetry factors of these diagrams.

In this note, we will be exploring a finite dimensional field theory: forms of $\mathfrak{su}(n)$, with focus on $n=2,3$. In the former case, it corresponds to some kind of Chern--Simons theory, which mathematically describes the moduli space of flat connections. There is vast literature on the subject of Chern--Simons, initiated by \cite{Witten:1988hf}, but this should not prevent us from exploring it further and looking at it from another viewpoint. One of our starting points was \cite{Nguyen:2021rsa}, where $\mathfrak{su}(2)$-forms with values in the fuzzy sphere (truncated spherical harmonics) is analyzed, with the goal of formulating a braided quantum field theory. 
In the case under inspection here (over a group manifold and with an external Lie algebra), our field theory will not have more than the 3-point interaction vertex, because of the triviality of the composition between the propagator and the product. This prevents us from finding non-vanishing Feynman diagrams above 3 points. From a more general perspective, this result is a consequence of the fact that cohomology embeds as a subalgebra to CE complex, \cite{Cattaneo:2009deo}. 
We confirm it by compiling some facts of representation theory of Lie groups and then by working out the diagrams from homological perturbation theory. The quantum effective action on cohomology was computed for $\mathfrak{su}(2)$ case by Cattaneo--Mn\"{e}v \cite{Cattaneo:2009deo}, section 3.4. We extend the classical argument to $\mathfrak{su}(n)$ and also find non-trivial tree-level contributions in the case of a partial homotopy transfer, which in physics terms means we integrate out a smaller number of fields. This work builds upon some preliminary findings contained in \cite{Ravas:thesis}.

\section{Lie algebra forms in field theory}

Classical field theories with gauge symmetries are the cornerstone of theoretical physics. Usually, the space of fields is infinite dimensional. Focusing on exterior forms of a Lie coalgebra (valued in the ground field $\mathbb{C}$ for a start) reduces the dimensionality to be finite. Then, an enlightening way to arrange the field content of the theory is in a cochain complex, as this accounts not only for the gauge symmetries, but also the field equations and the conservation laws. 
Such idea is at the core of the Batalin--Vilkovisky, approached with the paradigm of ``cohomological physics" \cite{Stasheff:1997iz} (see also \cite{Zwiebach:1992ie,Sachs:2019gue} in string theory). 
One places the gauge parameters, fields and their dual fields in different cochain degrees. In the following we shall explore forms of $\mathfrak{su}(n)$ with $n=2,3$ for concreteness, though our main result holds for generic $n \in \mathbb{N}$. 
We first collect some information on the notion of Chevalley--Eilenberg complex with trivial coefficients and its cohomology.

\paragraph{Chevalley-Eilenberg complex} If $\mathfrak{s}$ is a Lie algebra\footnote{Read $\mathfrak{s}$ for source; see Section \ref{ssec:cyclic}.}, one  defines the \emph{Chevalley-Eilenberg complex} with trivial coefficients by
\[ \Lambda^\bullet \mathfrak{s}^* \]
with a degree $1$ differential
\[ \mathrm{d}(e^i) = f_{jk}{}^i e^je^k, \]
where $e^i$ is a basis of $\mathfrak{s}^*$ dual to $e_i\in \mathfrak s$. The structure constants $f_{jk}{}^i$ are defined by $[e_j,e_k]= f_{jk}{}^i e_i$. This complex is a finite-dimensional model for the cohomology for compact Lie groups integrating $\mathfrak{s}$. More precisely, for any Lie group $S$ the complex $ \Lambda^\bullet \mathfrak{s}^*$ is isomorphic to the complex of left-invariant forms on $S$; while for compact and connected $S$ the inclusion of left-invariant forms into all forms induces an isomorphism on cohomology \cite[Theorem~9.1, Theorem~2.3]{ChevalleyEilenberg47}. Because $\Lambda^\bullet\mathfrak{s}^*$ models $\Omega^\bullet(S)$, we will call elements of $\Lambda^\bullet \mathfrak{s}^*$ \emph{Lie algebra forms}.
The Lie algebras we wish to consider are the $\mathfrak{su}(n)$ algebras, which belong to the class of \emph{reductive} Lie algebras.

\paragraph{$\mathfrak{su(2)}$} Our focus is on alternating forms over $\mathbb{C}$ of the (complex form of the) 
semi-simple Lie algebra{\footnote{As a matter of fact, over $\mathbb{C}$ this is $\mathfrak{sl}(2)$. This complexification will help us with the construction of a special deformation retract.} $\mathfrak{su}(2)$ endowed with the Chevalley--Eilenberg differential:
\[
\left(\Lambda^\bullet \mathfrak{su}(2)^*\right) \, , \; \; \mathrm{d} = \epsilon_{ij}{}^k \, e^i e^j \, e^*_k , \qquad e^i \in \mathfrak{su}(2)^* \, .
\]
This is a commutative differential graded algebra (cdga). Note that, as a manifold, $SU(2) \cong \mathbb{S}^3$. We will denote
\[
E= \left(\Lambda^\bullet \mathfrak{su}(2)^*\right)[1] \,.
\]
The degree shift is essential for a Batalin--Vilkovisky interpretation of the complex, as the ghost numbers are assigned according to
\begin{equation}
   \text{gh}(a) = i-1 \,, \quad a \in \Lambda^i\mathfrak{su}(2)^* \; .
\end{equation}
Hence linear functions of $\Lambda^0$, that have ghost degree $+1$, are the \emph{gauge parameters}; linear functions of $\Lambda^1$, whose ghost number is null, are the \emph{gauge fields} and linear functions of $\Lambda^2$, which carry ghost number $-1$, are their \emph{antifields}. Finally, the linear functions of $\Lambda^3$ are the \emph{antifields of the gauge parameters}.   This is typical of Chern--Simons theories \cite{Witten:1988hf}. As appropriate to the BV framework, the $\mathbb{Z}_2$-parity of the gauge parameters and the antifields is odd.

Another ingredient is a pairing on forms: 
\begin{align}
  \langle -,-\rangle: & \,  E \times E \to \mathbb{C}[-1]\, , \notag\\
  &\, a, b \mapsto  \langle a, b \rangle = 
      (-1)^{\vert a \vert} \int a \wedge b \,,  
\end{align}  
In the orthonormal basis of $\mathfrak{su}(2)^*$, the integral picks the coefficient of the top form $e^1 \wedge e^2 \wedge e^3$. Note that because of the $-1$ shift, the dual space is $E^*\equiv E[1]$.

We shall now unveil some information on the space of inequivalent solutions to the field equation $\mathrm{d} a = 0$. The Chevalley-Eilenberg cohomology groups are:
\begin{equation}
H^{i}(E)=  \begin{cases}
     \mathbb{C} , & i=0,3\, ,\\
     \emptyset, &  i=1,2\,,
 \end{cases}   
\end{equation}
by Whitehead lemma or also by homotopy groups of the sphere $\mathbb{S}^3 \cong SU(2)$. 
This highlights, for instance, that the $2$-form, a field strength for the $1$-form, is trivial in cohomology. 

\paragraph{$\mathfrak{su(3)}$} 
The next example to bear in mind is the Chevalley--Eilenberg complex for $\mathfrak{su}(3)$. As a vector space, this is $8$-dimensional. As a group manifold, it is the total space of a fibration 
\[
SU(3) \xrightarrow{\pi} \mathbb{S}^5 \; ,  
\]
with fiber isomorphic to $\mathbb{S}^3$. The Chevalley-Eilenberg cdga is
\[
\Lambda^\bullet \mathfrak{su}(3)^* \, , \; \;\mathrm{d} = f_{ij}{}^k \, e^i e^j \, e^*_k , \qquad e^i \in \mathfrak{su}(3)^* \, .
\]
The structure constants $f_{ij}{}^k$ for $\mathfrak{su}(3)$ are tabled: In the Gell-Mann representation, these are $f_{123}=1\ , \; f_{147}=f_{165}=f_{246}=f_{257}=f_{345}=f_{376}={\frac {1}{2}}\ ,\; f_{458}=f_{678}={\frac {\sqrt {3}}{2}}\ .$ Another convenient choice is the Cartan--Weyl basis of $3$ by $3$ matrices: if $e_{ij}$ stands for the unit in position $i,j$,
\begin{align}
\tilde E_1 = e_{11} -e_{22}, & \quad  \tilde E_2 = e_{22}-e_{33},  \quad \tilde E_3 = e_{12},  \quad \tilde E_4 = - e_{23},\notag \\ 
\tilde E_5 = -e_{13}, & \quad \tilde E_6= e_{21},  \quad \tilde E_7 = -e_{32},  \quad \tilde E_8 = e_{31}\,. \notag
\end{align}
We find it educative to present these bases, though we will not deploy them.

The Batalin--Vilkovisky interpretation is now less clear. Defining the forms over the whole bundle, they can be arranged as:
\begin{equation}
    \Lambda^0 \mathfrak{su}(3)^* \rightarrow \dots \rightarrow \Lambda^3 \mathfrak{su}(3)^* \rightarrow \Lambda^4 \mathfrak{su}(3)^* \rightarrow \Lambda^5 \mathfrak{su}(3)^* \rightarrow \dots \rightarrow \Lambda^8 \mathfrak{su}(3)^*\;,
 \end{equation}
 therefore, a reasonable possibility is to state that lower forms are ghosts for a gauge symmetry and its reducibility, while higher forms are higher N\"other identities. For now, we refrain from shifting the degrees, and we will come back to that later, when we will provide a $2$-product coming from coupling to a different Lie algebra $\mathfrak g$. Another ingredient is a pairing on forms: 
\begin{align}
  \langle -,-\rangle: & \,  E \times E \to \mathbb{C}\, ,\notag\\
  &\, a, b \mapsto \langle a, b \rangle = 
      (-1)^{\vert a \vert}\int a \wedge b \,,   
\end{align} 
that is non-zero when the integrand is a top form (8-form in this case). The normalization is not important for us. The typical shifting of a BV theory will be performed when we provide the product.

Let us now discuss the solutions $\mathrm{d}a =0$ modulo gauge symmetries. A general result about cohomology groups of $SU(n)$ is the following corollary of Hopf--Koszul--Samelson theorem \cite{Hopf}: If the subscript of $x_i$ denotes the degree in $H^i$, the cohomology groups are isomorphic to 
\[
H^\bullet(\mathfrak{su}(n)) \cong \Lambda[x_3, x_5, \dots x_{2n-1}]\;.
\] 
So for $\mathfrak{su}(3)$ the cohomology groups are:
\begin{align}
    H^i(\mathfrak{su}(3)) = \begin{cases}
        \mathbb{C} , \quad i=0,3,5,8,\\
        \emptyset \quad \text{otherwise}.
    \end{cases}
\end{align}
An explicit characterization is possible: the cohomology classes are the invariant exterior forms (it follows from \emph{Hopf-Koszul-Samelson theorem} \cite{Hopf, Meinrenken}) 
\begin{equation}
    H(\mathfrak{s}) \cong \left(\Lambda \mathfrak{s}^*\right)^{\mathfrak{s}}\; .
    \label{HKS}
\end{equation}
Specializing the statement to $\mathfrak{s}=\mathfrak{su}(3)$, the invariant tensor of $\Lambda^0$ is the identity, while that of $\Lambda^3$ is the totally antisymmetrized structure constants $f_{ijk}$.
We note that the invariants of $\Lambda^5$ and $\Lambda^8$ are obtained under Hodge star $\star$ isomorphism  (in fact, $\star (\Lambda^k \mathfrak{s}^*)^\mathfrak{s} = (\Lambda^{d-k}\mathfrak{s})^\mathfrak{s}= (\Lambda^{d-k}\mathfrak{s}^*)^\mathfrak{s}$, the conclusion following by self-duality of the invariants of a reductive Lie algebra). 

\subsection{On-shell fields}
\label{On shell}

In physics, the "shell" is the set of fields satisfying their fields equations modulo gauge symmetries, i.e.~the cohomology. Solutions to the field equations can be related to the original BV complex $(V, d)$ by choosing a \emph{special deformation retract}:
a decomposition  of fields
\[ V_i = \text{im} \,{d}_i \oplus H^i_{{d}} \oplus \text{im} \, k_{i+1} \;,\]
where $k$ is the so-called homotopy operator with properties listed below. One possibility is to choose a suitable Hodge star on $V$, them $\text{im} k = \text{im} \star d \star$.

\paragraph{Special deformation retracts} 
A homotopy equivalence of two complexes $(V,d_V)$ and $(W,d_W)$ is given by the data:
\begin{equation}
\begin{tikzcd}[column sep=large, row sep=small,every label/.append style={font=\normalsize}] 
 \arrow[l,"k", loop left] (V,d_V)  \arrow[r,shift left, "p"]   &(W, d_W) \arrow[l, shift left, "e" ]
\end{tikzcd}
\label{hom_eq}
\end{equation}
where the composition of \emph{embedding} $e$ and \emph{projection} $p$ satisfies:
\[
e\circ p=\hat{1}_V-d_Vk-kd_V.
\] 
This homotopy equivalence is called a deformation retract if the composition of the maps $e$ and $p$ in the reverse order comes down to an identity in the vector space $W$:
\[
p\circ e =\hat{1}_W
.\]  
There is also a special case of deformation retract called special deformation retract (or strong deformation retract), which will be of interest to us. The special deformation retract has additional conditions:
\begin{align*} 
p\circ k=0,\\
k\circ e=0,\\
k\circ k=0.
\end{align*}
If $W$ is the cohomology of $(V, d_V)$, with the trivial differential, we speak of special deformation retracts onto cohomology, or \emph{minimal models}. 

The complex $V$ might come equipped with a bilinear shifted symplectic pairing $\langle- , -\rangle$, with the differential being skew self-adjoint, i.e. \begin{equation}\label{eq:domegacompat}
\langle d_Vv, w\rangle + (-1)^{|v|}\langle v, d_Vw \rangle = 0,
\end{equation} and we could assume that $k$ is self-adjoint and $p^\dagger = e$. In this case, then, the special deformation retract is completely and equivalently specified by $I \equiv \text{im} k\subset V$; the only condition $I$ has to satisfy is that the pairing $\langle- , d_V-\rangle$ is non-degenerate on $I$ \cite[Ch.~5~Sec.~2.7]{Costello:2011but}\cite[Prop.~3.15]{Jurco:2024gct}. These SDR can be called \emph{symplectic} (used in \cite{Jurco:2024gct}) or \emph{cyclic}, which mirrors the terminology used of $L_\infty$ algebras and was suggested to us by E. Getzler. Maximal $I$ (i.e. not included in a bigger non-degenerate isotrope) correspond to SDRs where $W$ is the cohomology. 

The symplectic pairing is usually taken to have degree $-1$, but we will use a generalization to arbitrary degrees. The characterization of such SDR in terms of non-degenerate isotropes \cite[Prop.~3.15]{Jurco:2024gct} applies to this case as well, as one can easily check. The quadratic form $S_\text{free}$ is of degree $| \langle -, - \rangle|  +1$ in this case.

\paragraph{Special deformation retracts for $\mathfrak{su}(2)$}

Let us now explain which special deformation retracts one can choose on the Chevalley-Eilenberg complex of $\mathfrak{su}(2)$. We start with $E = \Lambda^\bullet \mathfrak{su}(2)^*[1]$ with a degree $-1$ pairing. The degree zero pairing  $\langle- , \mathrm{d}-\rangle$ is equal to the standard scalar product on $\Lambda^1 \mathfrak{su}(2)^*$ so that the basis $e^i$ is orthonormal; it vanishes on other components. Therefore, special deformation retracts are given by vector subspaces $I$ of $\Lambda^1 \mathfrak{su}(2)^* \cong \mathbb C^3$. This induces a decomposition
\[E \cong \underbrace{I}_{\subset \Lambda^1 \mathfrak{su}(2)^*} \oplus \underbrace{\mathrm{d}I}_{\subset \Lambda^2 \mathfrak{su}(2)^*} \oplus (I \oplus \mathrm{d}I)^\perp, \]
so that $p$ and $e$ are the projections and inclusions of the last factor, while $k$ is the inverse to $\mathrm{d} \colon I \to \mathrm{d}I$. We see that there are multiple possibilities on the dimension of $I$:
\begin{itemize}
    \item[$\dim I = 0$] This is the trivial SDR with $p = e = \text{id}$ and $k=0$.
    \item[$\dim I = 1$] There is $\mathbb{CP}^2$ worth of choices of $I$. The reduced space $W$ of the SDR consists of the (Euclidean) orthogonal to $I$ in degree $0$ (1-forms $\Lambda^1 \mathfrak{su}(2)^*$) and its image under $\mathrm{d}$ in $\Lambda^2 \mathfrak{su}(2)^*$ and the $0$-forms and $3$-forms also belong to the orthogonal $W = (I \oplus \mathrm d I )^\perp$.
    \item[$\dim I = 2$] There is again $\mathbb{CP}^2$ worth of choices, representing the Euclidean orthogonal to $I$.
    \item[$\dim I =3$] We need to take $I = \Lambda^1 \mathfrak{su}(2)^*$, i.e. there is only one possible choice. This corresponds to a SDR onto homology. The fact that there is a unique SDR in this case was observed by Cattaneo and Mn\"ev \cite[Example~2]{Cattaneo:2009deo}. $k$ explicitly reads:
    \[
    e^1e^2 \xrightarrow{k} e^3, \quad e^2e^3 \xrightarrow{k} e^1, \quad e^3 e^1 \xrightarrow{k} e^2\;.
    \]
\end{itemize}

Cattaneo and Mn\"{e}v \cite{Cattaneo:2009deo} also discuss homotopy transfer along deformation retracts
where the condition $k^2=0$ is relaxed; the other side conditions still hold.
In this $\mathfrak{su}(2)$ case, such relaxation is not possible: By the 
side conditions, we get that $d$ commutes with $k^2$ and that 
\[ k = kdk + dk^2 = kdk + k^2 d. \]
In the Chevalley-Eilenberg complex, the term $k^2 d = d k^2$ is zero, and therefore $k=kdk$ holds.
This implies that the components $\Lambda^1 \to \Lambda^0$ and $\Lambda^3 \to 
\Lambda^2$ of the homotopy are zero; since the corresponding components of $\mathrm d$ are zero. This implies that $k^2 =0$ automatically.

\paragraph{Special deformation retracts for $\mathfrak{su}(n)$}
As a consequence of working with a reducible Lie algebra, $\Lambda^\bullet \mathfrak{s}^*$ decomposes as
\begin{equation}
    \label{eq:reductive}
    \Lambda \mathfrak{s}^* = (\Lambda \mathfrak{s}^*)^\mathfrak{s} \oplus \mathfrak{s}. \Lambda \mathfrak{s}^*\;.
\end{equation}  
In view of the fact that the cohomology is given by the invariants (\cref{HKS} for $\mathfrak{su}(3)$), we want to construct an SDR 
\begin{equation}
\begin{tikzcd}[column sep=large, row sep=small,every label/.append style={font=\normalsize}] 
 \arrow[l,"k", loop left] (\Lambda \mathfrak{s}^*, \mathrm{d})  \arrow[r,shift left, "p"]   &((\Lambda \mathfrak{s}^*)^\mathfrak{s}, 0) \arrow[l, shift left, "e" ]\;,
\end{tikzcd}
\label{sdr-ce}
\end{equation}
with $e$ and $p$ being the inclusion and projection along the decomposition \eqref{eq:reductive}. Note that then $e$ is an algebra morphism in this case.

We will use the homotopy operator from \cite[Prop.~10.9]{Meinrenken}. Let us recall its construction. In addition to $\mathrm{d}$, there is a second order differential operator $\partial$ acting on $\Lambda \mathfrak{s}^*$, given by $\partial (\alpha \wedge \beta) = [\alpha, \beta]$ and $\partial \alpha = 0$ for $\alpha, \beta \in \mathfrak {s}^*$. This  bracket on $\mathfrak{s}^*$ is transferred from the Lie bracket on $\mathfrak{s}$ using a non-degenerate invariant symmetric pairing (e.g. the Killing form for semisimple $\mathfrak{s}$). The decomposition \eqref{eq:reductive} can be refined further
to 
\begin{equation} \label{eq:decrefined}\Lambda \mathfrak{s}^* = (\Lambda \mathfrak{s}^*)^\mathfrak{s} \oplus \operatorname{im}{\mathrm{d}} \oplus \operatorname{im}{\partial}\end{equation}
with $\mathrm{d}$ and $\partial$ being isomorphisms between the last two factors. The homotopy operator is defined using  $D \equiv \mathrm{d} \,\partial + \partial \, \mathrm{d}$: let $k_D$ be the homotopy inverse to $D$, i.e. inverse to $D$ on $(\Lambda \mathfrak{s}^*)^\mathfrak{s}$. Then the canonical homotopy $k$ is 
\begin{equation}
k = k_{D} \circ \partial.
\end{equation}
We now need to check that we indeed get a cyclic special deformation retract. 
\begin{proposition}
    The special deformation retract of \cite[Prop.~10.9]{Meinrenken} is cyclic with respect to the \emph{integral} pairing $\langle \alpha, \beta \rangle = (-1)^{|\alpha|}\int \alpha \wedge \beta$ on $\Lambda \mathfrak{s}^*$ for a  reductive Lie algebra $\mathfrak{s}$.
\end{proposition}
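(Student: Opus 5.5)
We need to show that $k=k_D\partial$ is (graded) self-adjoint for $\langle-,-\rangle$ and that $e$ and $p$ are adjoint to one another. Equivalently, by the isotrope characterization of \cite[Prop.~3.15]{Jurco:2024gct}, we can show that $I=\operatorname{im}k=\operatorname{im}\partial$ is isotropic, that $\langle -,\mathrm d-\rangle$ is non-degenerate on it, and that the complement $(\Lambda\mathfrak s^*)^{\mathfrak s}$ is the symplectic orthogonal of $\operatorname{im}\mathrm d\oplus\operatorname{im}\partial$. The plan is to verify these directly from adjointness properties of $\mathrm d$ and $\partial$ with respect to the integral pairing, which is built from the Berezin integral $\int$ (top-degree coefficient).

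\textbf{Step 1: $\partial$ is skew-adjoint, like $\mathrm d$.} Via the invariant pairing on $\mathfrak s$, identify $\Lambda\mathfrak s^*\cong\Lambda\mathfrak s$. Under this identification $\partial$ becomes the Chevalley--Eilenberg \emph{boundary} (Koszul) operator, and $\mathrm d$ is $\partial$'s adjoint for the induced symmetric pairing $B$ on $\Lambda\mathfrak s^*$. The integral pairing is related to $B$ by the Hodge star, $\int\alpha\wedge\beta=B(\star\alpha,\beta)$ up to sign, where $\star$ is defined by contraction with the top form. The key identity I will establish is $\star\,\mathrm d=\pm\,\partial\,\star$. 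Since $\mathfrak s$ is reductive and hence unimodular, the top form is invariant and $\star$ intertwines the coadjoint actions; the identity then follows by writing $\mathrm d=\tfrac12 e^i\wedge L_{e_i}$ and $\partial=\tfrac12\iota_{e_i}L_{e_i}$ up to conventions. From this identity, $\partial$ is graded skew-adjoint for $\langle-,-\rangle$, i.e.\ it satisfies \eqref{eq:domegacompat}. Alternatively, one can check this directly: $\partial$ is a second-order derivation and $\int\partial(\gamma)=0$ by unimodularity (traces of $\mathrm{ad}$ vanish), so the BV-type identity $\int\partial(\alpha\wedge\beta)=0$ gives skew-adjointness up to the bracket term $[\alpha,\beta]$. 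I expect this sign and convention bookkeeping to be the main obstacle, and I would first test it in the $\mathfrak{su}(2)$ case against the explicit $k$ given above.

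\textbf{Step 2: consequences.} With both $\mathrm d$ and $\partial$ skew-adjoint, $D=\mathrm d\partial+\partial\mathrm d$ is self-adjoint. Moreover, $D$ is the Casimir acting via the coadjoint representation (up to scale), so it commutes with $\mathrm d$, with $\partial$, and with the $\mathfrak s$-action. Its kernel is exactly the invariants, and $k_D$ is self-adjoint as well, being the inverse on a self-adjoint complement of the kernel. Hence $k=k_D\partial$ satisfies $k^\dagger=\pm\partial k_D=\pm k_D\partial=\pm k$, using that $\partial$ commutes with $D$; the sign is the one required for a degree $-1$ self-adjoint homotopy. The invariants are orthogonal to $\operatorname{im}\mathrm d+\operatorname{im}\partial$: for invariant $c$ we have $\mathrm dc=0=\partial c$, so $\langle c,\mathrm d x\rangle=\pm\langle \mathrm dc,x\rangle=0$, and the same holds for $\partial$. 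This shows that $p$, the projection along \eqref{eq:decrefined}, is the adjoint of the inclusion $e$.

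\textbf{Step 3: non-degeneracy.} Since the integral pairing is non-degenerate on $\Lambda\mathfrak s^*$ and the decomposition \eqref{eq:decrefined} has the invariants orthogonal to the rest, the pairing restricts non-degenerately to $\operatorname{im}\mathrm d\oplus\operatorname{im}\partial$. The subspace $\operatorname{im}\partial$ is isotropic, because $\langle\partial x,\partial y\rangle=\pm\langle x,\partial^2y\rangle=0$, and likewise $\operatorname{im}\mathrm d$ is isotropic. Therefore the two subspaces are in non-degenerate duality. As $\mathrm d\colon\operatorname{im}\partial\to\operatorname{im}\mathrm d$ is an isomorphism, $\langle-,\mathrm d-\rangle$ is non-degenerate on $I=\operatorname{im}\partial$. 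This is precisely the cyclicity criterion, and it completes the argument.
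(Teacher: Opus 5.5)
Your argument is sound in outline, but it takes a different route from the paper. The paper never determines the adjoint of $\partial$. It proves only that $\operatorname{im}\partial$ is isotropic, by expanding $\partial\alpha\wedge\partial\beta$ with the second-order (BV) identity for $\partial$. In that expansion, $\int\partial(\alpha\wedge\partial\beta)$ vanishes for degree reasons. The term $\int[\alpha,\partial\beta]$ vanishes because Meinrenken's Lemma~10.3(ii) places $[\alpha,\partial\beta]$ in $\mathfrak s\cdot\Lambda\mathfrak s^*$, which $\int$ annihilates. Orthogonality of the invariants likewise comes from $\int x\cdot\alpha=0$ against $\mathfrak s\cdot\Lambda\mathfrak s^*=\operatorname{im}\mathrm d\oplus\operatorname{im}\partial$, and the isotrope criterion is then quoted.

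You instead prove the global fact that $\partial$ is its own transpose up to a degree-dependent sign (Poincar\'e duality $\star\mathrm d\star\propto\partial$ for unimodular quadratic $\mathfrak s$). Isotropy then follows from $\partial^2=0$, and orthogonality from $\mathrm dc=\partial c=0$. This also gives the adjoint of $k$ directly, and your Step~3 spells out the non-degeneracy of $\langle-,\mathrm d-\rangle$ on $\operatorname{im}\partial$, which the paper leaves implicit. The paper's route is lighter, since it needs no Hodge star and no sign for $\partial^\dagger$; yours is more informative.

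The sign you assert in Step~1 is wrong, however: $\partial$ does \emph{not} satisfy \eqref{eq:domegacompat}. Write $\partial=\tfrac12\sum_a\iota_{e_a}L_{e_a}$ in a $B$-orthonormal basis. Using $\int\iota_{e_a}(\cdot)=0$ (degree), $\int L_{e_a}(\cdot)=0$ (invariance of the top form) and $[L_{e_a},\iota_{e_a}]=\iota_{[e_a,e_a]}=0$, one gets $\int\partial v\wedge w=(-1)^p\int v\wedge\partial w$ for $v\in\Lambda^p\mathfrak s^*$. By contrast, $\int\mathrm dv\wedge w=-(-1)^p\int v\wedge\mathrm dw$. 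So with the shifted degrees for which $\mathrm d$ satisfies \eqref{eq:domegacompat}, $\partial$ satisfies $\langle\partial v,w\rangle=(-1)^{|v|}\langle v,\partial w\rangle$ instead. For example, in $\mathfrak{su}(2)$ take $v=w=e^1e^2$, so $\partial v=e^3$ and $|v|=1$. Then $\langle\partial v,w\rangle=1=-\langle v,\partial w\rangle$, and the left side of \eqref{eq:domegacompat} equals $2$.

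This opposite sign is exactly what a cyclic homotopy needs. $k=k_D\partial$ inherits $\langle kv,w\rangle=(-1)^{|v|}\langle v,kw\rangle$, as the explicit $\mathfrak{su}(2)$ homotopy $e^1e^2\mapsto e^3$ confirms. Your stated sign would instead give $k$ the adjointness of $\mathrm d$, which that example violates. Isotropy, orthogonality and non-degeneracy only use $\partial^\dagger=\pm\partial$, so Steps~2--3 otherwise survive.

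Your fallback argument in Step~1 does not survive. $\int\partial\gamma=0$ holds simply because $\partial$ lowers degree; unimodularity is what gives $\int\mathrm d\gamma=0$. More seriously, the defect $\partial(\alpha\wedge\beta)-\partial\alpha\wedge\beta-(-1)^p\alpha\wedge\partial\beta$ integrates to $-2\int\partial\alpha\wedge\beta\neq0$, and it is precisely what flips the sign. The paper avoids this term by only pairing against $\partial\beta$, where Lemma~10.3(ii) kills it.
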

The pairing requires a choice of normalization/a reference top-form; the result of course does not depend on this choice. The pairing on $\Lambda s^*[1]$ is of degree $2-\dim{\mathfrak s}$.

For comparison with \cite{Meinrenken}, we note that we work with $\Lambda \mathfrak{s}^*$ instead of $\Lambda\mathfrak{s}$, using the identification via a pairing on $\mathfrak{s}$. Note that the ``orthogonality'' which Meinrenken \cite{Meinrenken} describes is with respect to a different pairing which pairs $k$-forms with $k$-forms; this pairing does not have a well-defined degree,
\begin{proof}
    First, we note that  integral $\int$ which selects a coefficient of a top-form in $\Lambda \mathfrak{s}^*$ satisfies the following two properties
    \begin{align} \int \mathrm{d} \alpha &= 0, \\  \int x\cdot \alpha &= 0 \label{eq:intonactionzero}\end{align}
    where $\alpha \in \Lambda \mathfrak{s}^*$ and $x\in \mathfrak{s}$, the dot denoting the action of $\mathfrak{s}$ on the module of forms $\Lambda \mathfrak{s}^*$. Both follow from the fact that the top-form is a non-trivial invariant cocycle; or they can be seen by transferring the claims to the forms on the group $\Omega^\bullet{(S)}$ (a left-invariant top form on $S$ is closed and also invariant, since reductive Lie algebras are unimodular).

    The first property implies that our pairing $\langle \alpha, \beta \rangle = (-1)^{|\alpha|}\int \alpha \wedge \beta$ is compatible with $\mathrm{d}$ as in \eqref{eq:domegacompat}; the left hand side gives $\int \mathrm d(\alpha \wedge \beta)=0$. Furthermore, our symplectic form $\langle- , - \rangle$ pairs the two subspaces $(\Lambda \mathfrak{s}^*)^\mathfrak{s} \oplus \mathfrak{s}. \Lambda \mathfrak{s}^*$ to zero by \eqref{eq:intonactionzero}. Both are thus symplectic and mutual orthogonals to each other.
    \smallskip

    Instead of directly proving that the homotopy operator of \cite{Meinrenken} is self-adjoint, we will proceed indirectly. We will show that the SDR is constructed from a non-degenerate isotrope $\text{im}\partial$, which is enough by \cite[Prop.~3.15]{Jurco:2024gct}. The homotopy operator of Meinrenken can be seen as the inverse to $\mathrm d\colon \text{im} \partial \to \text{im} \mathrm d$. Indeed, $D$ is equal to $\partial \mathrm d$ on $\text{im} \mathrm d$, and thus $k = (\partial \mathrm d)^{-1} \partial $ is inverse to $\mathrm d$. 

    Let us now check that $\text{im} \partial$ is isotropic. Since $\partial$ is a second order differential operator \cite[Prop. 10.6]{Meinrenken}, we have
    \[ \langle \partial \alpha, \partial \beta \rangle = \pm \int \partial \alpha\wedge \partial \beta  = \int \partial(\alpha \wedge \partial \beta) \pm \int [\alpha, \partial \beta].\]
    The first term vanishes since $\alpha \wedge \partial \beta$ would have to be a top $+1$ form. The second term vanishes since by \cite[Lemma~10.3.ii]{Meinrenken}, $[\alpha, \partial \beta]$ lies in $\mathfrak{s}\cdot \Lambda \mathfrak{s}^*$, and we conclude by \eqref{eq:intonactionzero}.
    
    Thus the decomposition \eqref{eq:decrefined} is exactly as in \cite[Prop.~3.15]{Jurco:2024gct}: the differential is compatible with the pairing, while the homotopy is the inverse to the differential, seen as a map from the isotropic subspace $\text{im}\partial$. Finally, $p$ and $e$ are projections along the orthogonal decomposition \eqref{eq:reductive}.
\end{proof}
Let us remark that $\partial$ can be defined using the Hodge star operator as $\star \mathrm{d} \star$, which was used in a related setting for $\mathfrak{su}(2)$ in \cite{Nguyen:2021rsa}.

\subsection{Observables} The observables of a classical field theory are functions on $V$, invariant under gauge symmetries and subjected to the field equations. Thus, we will consider the symmetric tensors on $V^*$, i.e. symmetric powers of $\text{Sym}^\bullet V^*$, with $V^*$ consisting of linear observables. The action functional is an example of such observable. The symmetric powers $\text{Sym}^\bullet(V^*)$ are the quotient of the tensor products of $V^*$ by the ideal
\[
< g\otimes f - (-1)^{|g||f|} f\otimes g>\, , \quad g, f \in V^*.
\]
where $|g|$ is the degree of $g$.

The symmetric algebra can be given the structure of a cdga. If $V^*=V[1]$, which we will always assume in the following, the Chevalley--Eilenberg differential can be given by 
\[
\mathrm{d}_{V[1]} = -\mathrm{d}
\]
and can be lifted to Sym, by enforcing the derivation property. To briefly illustrate how this works, let us take $(V^*, -\mathrm{d})$; then $\mathrm{d}_S$ coincide with $-\mathrm{d}$ on $\text{Sym}^1(V^*) \cong V^*$, while on $\text{Sym}^2(V^*)$:
\[
Q_S(\alpha \otimes_S \beta) :=   -\mathrm{d}\alpha \otimes_S \beta + (-1)^{\vert a \vert} \alpha \otimes_S -\mathrm{d}\beta
\]
and so on for higher symmetric powers. Note that acting with $Q_S$ does not affect the polynomial degree of a symmetric tensor.

The whole deformation retract extends to the symmetric algebra (\emph{tensor trick} for the symmetric \emph{cobar construction}):
\begin{equation}
    \begin{tikzcd}[column sep=large, row sep=small,every label/.append style={font=\normalsize}] 
\arrow[l,"K", loop left] (\text{Sym}^\bullet V^* , Q_S)  \arrow[r,shift left, "\Pi"] &\arrow[l, shift left, "\mathcal{E}" ] (\text{Sym}^\bullet(H^\bullet(V)^*), 0)\;.\end{tikzcd}
\label{tensor-trick}
\end{equation}
Here, $\mathcal{E}, \Pi$ and $K$  are extensions of the maps for the first order in the symmetric expansion, i.e. on $\text{Sym}^{q}(V^*)$ they act as : 
\begin{align}
    \mathcal{E}^{(q)} :=& \, \underbrace{p^{\text t} \otimes_S 
    p^{\text t}\otimes_S p^{\text t} \dots}_{q \text{ times }} \label{emb}\\
    \Pi^{(q)} := & \, e^{\text t} \underbrace{\otimes_S e^{\text t} \otimes_S e^{\text t} \dots}_{q \text{ times}} \label{proj}\\
    K^{(q)} := & \sum_{j=0}^{q} (\text{id})^{\otimes_S^j} \otimes_S k^\text t \, \otimes_S (p^{\text t} \circ e^{\text t})^{\otimes_S^{q-1-j}} \;. \label{homotopy}
\end{align}
The superscript $\text{t}$ stands for the transpose. Note that since we are on the dual, the maps $e$ and $p$ change roles.
The symmetric product of maps signifies that we symmetrize the maps. Note that
\[
\alpha \otimes_S \beta  = (-1)^{\vert \beta \vert \vert \alpha \vert} \beta \otimes_S \alpha\;, \quad \alpha,\beta \in V^*
\]
 is actually an antisymmetric product because the parity ($\mathbb{Z}_2$ grading) of both $\alpha$ and $\beta$ is odd ($\vert \alpha \vert = 1 = \vert \beta \vert$). Instead, for instance, the product in $\text{Sym}^2(V^*)$ of elements of form degree $2$, which are parity even, is genuinely symmetric.
\vspace{0.5cm}

\subsection{External Lie algebra and interactions} 
Finally, we need to connect these general considerations with the Chevalley-Eilenberg complex of $\mathfrak{su}(n)$. To encode non-trivial interactions for fields, we cannot consider $V = \Lambda \mathfrak{su}(n)^*[1]$ directly. This $V$ is a Batalin-Vilkovisky complex with a pairing and a differential, which encodes a free theory. There is a cubic tensor on this $V$ representing the wedge product, but this does not give a Batalin-Vilkovisky interacting action on $\text{Sym}^\bullet(V^*)$, essentially due to signs: as we can test on $\text{Sym}^2(V^*)$ elements,
\[
\wedge  \, (\alpha \otimes \beta + (-1)^{|\alpha||\beta|} \beta \otimes \alpha) = \alpha \wedge \beta + (-1)^{|\alpha||\beta|} \beta \wedge \alpha =0\;.
\]
Here we mention that in this setting one can use "contracted coordinate functions", see \cite{Jurco:2018sby}, which are the following objects:
\begin{equation}
    \text{Sym}V^* \otimes V[-1]\ni \varphi = \phi \otimes e\; .
\end{equation}
However, this will not be our point of view here.

Another alternative is the following: the wedge product can be seen as an interaction if instead of $\text{Sym}^\bullet(V^*)$, which is the free commutative algebra on $V^*$, we would consider the free Lie superalgebra on $V^*$; the mathematical structure behind this is the theory of homotopy commutative (or $C_\infty$) algebras.

Instead, we will stay in the realm of ordinary Batalin-Vilkovisky observables, which, in the interacting case, require $V$ to be a (homotopy) Lie algebra. This can be easily achieved by considering Lie algebra forms valued in another Lie algebra $\mathfrak g$, that is working with
\[V \equiv E_\mathfrak{g} = \Lambda \mathfrak{su}(n)^*[1]\otimes \mathfrak g.\]
Here we stress that $\mathfrak g$ is seen as a trivial $\mathfrak{su}(n)$ module. See however the discussion at the end in Section \ref{conclusions}. 

We refer to Section \ref{ssec:operads} on the compatibility between homotopy transfer before and after tensoring with $\mathfrak{g}$. In other words, from now on we demand to handle $\mathfrak{su}(2)$ forms valued in another Lie algebra, akin to Chern--Simons on a 3-dimensional group manifold. The case of $\mathfrak{su}(3)$ forms valued in an external $\mathfrak{g}$ will be compared to BF theory \cite{Schwarz:1978cn}.

Coming back to our original issue, we now consider another Lie algebra which is quadratic, i.e. it comes equipped with a non-degenerate ad-invariant symmetric pairing $\langle -,-\rangle_\mathfrak{g}$. The Killing form of a semisimple Lie algebra is an example. We can also set up a trivial deformation retract: 
\[\begin{tikzcd}[column sep=small, row sep=small,every label/.append style={font=\footnotesize}]
(\mathfrak{g}, 0) \arrow[r,shift left, "id"]  & \arrow[l,shift left, "id"] (\mathfrak{g}, 0)\;.
\end{tikzcd}\]
We can form the tensor product of \eqref{sdr-ce} with the above, to obtain ($E_\mathfrak{g} := (\Lambda \mathfrak{s}^* \otimes \mathfrak{g} )[1]$):
\begin{equation}
\begin{tikzcd}[column sep=large, row sep=small,every label/.append style={font=\normalsize}] 
 \arrow[l,"k\otimes 1", loop left] (E_\mathfrak{g},\mathrm{d}\otimes 1)  \arrow[r,shift left, "p \otimes 1"]   &(H_\mathrm{d}\otimes \mathfrak{g}, 0) \arrow[l, shift left, "e \otimes 1" ] \;.
\end{tikzcd}
\end{equation}
Later we will consider the Lie bracket as exemplary of a 2-product. By tensor trick we obtain in the $\mathfrak{su}(2)$ case: 
\[\begin{tikzcd}[column sep=small, row sep=small,every label/.append style={font=\normalsize}]
\arrow[l,"K", loop left]\;\;\;\;\;\;\;\;\;\;\;\;\;\;\;\;\;\;\;\;(\text{Sym}^\bullet((\Lambda \mathfrak{su}(2)^*)[2] \otimes \mathfrak{g}), Q_S{}_{(-\mathrm{d}\otimes 1)} ) \arrow[r,shift left, "\Pi"] & \arrow[l,shift left,"\mathcal{E}"] (\text{Sym}^\bullet(\mathfrak{g}[2] \oplus \mathfrak{g}[-1]), 0)\;,
\end{tikzcd}\]
where $Q_S{}_{(-\mathrm{d}\otimes 1)}$ is shorthand notation for the extension of the differential $-\mathrm{d}\otimes 1$ on $E[1]$ to the symmetric algebra by Leibniz rule. This homotopy transfer is promising: we will be cranking the homological perturbation machine on it.

To disclose the case of $\mathfrak{su}(3)$ we need some preliminaries on pairings and degrees.

\subsubsection{Cyclic structures and degrees} \label{ssec:cyclic}
For $\mathfrak s = \mathfrak {su}(n)$, we have a non-degenerate trace on the algebra of Chevalley-Eilenberg cochains of degree $-\dim \mathfrak{su}(n) = 1-n^2$ - i.e. a commutative Frobenius algebra with a shifted pairing in the language of \cite{Cattaneo:2009deo}. To arrive at Batalin-Vilkovisky formalism conventions, we will shift\footnote{There are two competing conventions for homotopy algebras in general. In one of them, the operations with $n$ inputs have degree $2-n$, which means that the binary operation has the familiar degree $0$. Another convention is obtained by shifting this vector space down by one degree (denoted $[1]$), where all operations have degree $1$ and, due to Koszul signs involved in these shifts, they have different symmetry properties. For example, in the first convention, $L_\infty$ operations are graded-antisymmetric, whereas in the second convention they are graded symmetric. For cyclic homotopy algebras, we need a suitable pairing; it is of degree $-3$ in the first convention and of degree $-1$ in the second convention. The second convention is more geometric, in the sense that $V[1]$ is a formal dg ($-1$ shifted symplectic) space, with the differential (or its Hamiltonian) encoding the (cyclic) homotopy algebra structure.} to 
\[ E = \Lambda^\bullet \mathfrak{su}(n)^*[1] \]
which has a pairing of degree $3-n^2$. To get a pairing of degree $-1$ on $E\otimes \mathfrak g$, the Lie algebra must have a pairing of degree $n^2-4$. Thus, for $\mathfrak{su}(n)$, we tensor with the usual metric Lie algebra; for $\mathfrak{su}(3)$ the algebra $\mathfrak g$ should have a pairing of degree $5$.

The space of fields of this theory $\Lambda^\bullet \mathfrak{su}(n)^*[1] \otimes \mathfrak g\cong \Lambda^\bullet \mathfrak{su}(n)^* \otimes \mathfrak g[1]$, should be understood as a (finite-dimensional model of the) AKSZ theory with $T[1]SU(n)$ as the source and $\mathfrak{g}[1]$ as the target. The source has a volume element of degree $1-n^2$, while the target is dg symplectic of degree $n^2-2$. 

\paragraph{$\mathfrak{su}(3)$} Concerning $\mathfrak{su}(3)$, the invariant pairing has degree
\[
\vert \langle -,-\rangle_{\mathfrak{g}}\vert =  5\; . 
\]
Then, this combines with the degree of the pairing on forms of $\Lambda \mathfrak{su}(3)^*[1]$,
\[
\left\vert \int -\wedge -\;\right\vert = -(3)^2 +3 = -6 \; 
\]
to yield a BV pairing $\langle-,-\rangle : E_\mathfrak{g} \times E_{\mathfrak{g}} \to \mathbb{C}[-1]$:
\[
\langle -, - \rangle := \int \langle - \overset{\wedge}{,} - \rangle_{\mathfrak{g}[1]} \; .
\]
Then, with the present choices of degrees for the pairings, $E_\mathfrak{g}^*\cong E_\mathfrak{g}[1]$. Now we are in the position to provide an interpretation of the BV complex. We believe it can be understood as a particular kind of AKSZ theory, a BF theory \cite{Schwarz:1978cn,Mnev:2017oko}. Indeed, notice that because of the degree of the pairing, graded components of the auxiliary Lie algebra $\mathfrak g$ of different degrees are paired: the total degree has to be $-5$. For example, we could assume that   
\[
\mathfrak{g}= \mathfrak{g}_0 \oplus \mathfrak{g}_{-5}[5]\; , \quad \langle-, - \rangle \text{ induces a non-degenerate pairing } \mathfrak{g}_0 \times \mathfrak{g}_{-5} \to \mathbb C\;.
\]
Therefore, due to our choice, $E_\mathfrak{g}$  is actually concentrated in degrees\footnote{This implies that $\mathfrak g_0$ is a non-graded Lie algebra and $\mathfrak g_{-5}\cong \mathfrak g_0^*$, with the bracket $[\mathfrak g_0, \mathfrak g_{-5}]$ given by the coadjoint action.} $0$ and $-5$
\[
E_\mathfrak{g_0} \oplus E_\mathfrak{g_{-5}}[5]\;,
\]
so if we use the BF-inspired coordinates $A = \sum A_a e^a$ on $E_{\mathfrak{g}_0}$ and $B=\sum B_a e^a$ on $E_{\mathfrak{g_{-5}}}[5]$, then:
\[
\delta A \wedge \delta B 
\]
is the degree-shifted BV symplectic form on the space of maps between the source and the target, and the reducible gauge symmetries reach ghost number $8-2=6$, as shown in the table \ref{tab:degrees}:

\begin{table}[h!]
\small
    \centering
    \begin{tabular}{c||c c c|c c c}
        forms & $\Lambda^0 \otimes \mathfrak{g}_0$ & $\Lambda^1 \otimes \mathfrak{g}_0$ & $\Lambda^2 \otimes \mathfrak{g}_0\cdots \Lambda^8 \otimes \mathfrak{g}_0$ & $\Lambda^0 \otimes \mathfrak{g}_{-5}$ & $\Lambda^1 \otimes \mathfrak{g}_{-5}$ & $\Lambda^2 \otimes \mathfrak{g}_{-5}\cdots\Lambda^6 \otimes \mathfrak{g}_{-5} \cdots \Lambda^8 \otimes \mathfrak{g}_{-5}$ \\
        \hline
       ghost deg  & $-1$ & $0$& $1 \cdots 7$  & $-6$ & $-5$ & $-4\cdots \;\;\;\;\;\; 0 \;\;\;\;\;\; \cdots 2$\\
    \end{tabular}
    \caption{The degrees of $\mathfrak{su}(3)$-forms valued in an external Lie algebra.}
    \label{tab:degrees}
\end{table}

Eventually, for the cobar construction, removing the reference to the internal degrees of $\mathfrak{g}$, we get:
\[\begin{tikzcd}[column sep=small, row sep=small,every label/.append style={font=\normalsize}]
\arrow[l,"K", loop left]\;\;\;\;\;\;\;\;\;\;\;\;\;\;\;\;\;\;\;\;(\text{Sym}^\bullet((\Lambda \mathfrak{su}(3)^*)[2] \otimes \mathfrak{g}), Q_S{}_{(-\mathrm{d}\otimes 1)}) \arrow[r,shift left, "\Pi"] & \arrow[l,shift left,"\mathcal{E}"] (\text{Sym}^\bullet(\mathfrak{g}[-2] \oplus \mathfrak{g}[1] \oplus \mathfrak{g}[3] \oplus \mathfrak{g}[6] ), 0)\;.
\end{tikzcd}\]

Later we will drop the explicit mention in the subscript $(-\mathrm{d}\otimes 1)$ and deform $Q_S$ by  $Q_S + \delta_Q$ where  $\delta_Q$ corresponds to the Lie bracket. Note that, when dealing with $\mathfrak{su}(2)$-forms, the form did on $\mathfrak g$ not need to be shifted and the above discussion would be superfluous.

Before moving to perturbation theory, an important observation to make is that the pairing $\langle -, -\rangle$ can be lifted to $\text{Sym}(E^*_{\mathfrak{g}})$ and endows that dg algebra with a Poisson bracket, derived from that pairing \cite{Nguyen:2021rsa}, \[
(\!(-,-)\!):  \text{Sym}^iE_\mathfrak{g}^* \otimes \text{Sym}^jE_\mathfrak{g}^* \to \text{Sym}^{i+j-2}E_\mathfrak{g}^*. 
\]
Here we will not delve into the details for this pairing: it exists and is used to define the action functional, or in other words we have a cyclic $L_\infty$ structure at disposal. Another way to talk about this very same object is as a $P_0$ algebra structure \cite{Costello:2021jvx},
\[\left(\text{Sym}E_\mathfrak{g}^*, Q_S, (\!(-,-)\!) \right)\;.\]
This is thus a classical BV system, and one can set out to find solutions to the classical master equation. It is known that the differential and the Lie bracket on $E_\mathfrak{g}$ will give an action with a quadratic and a cubic term $\langle - , \mathrm{d}- \rangle + \langle - , [-,-]\rangle$, which will be a solution to a classical master equation in this Gerstenhaber algebra. Here, anyway, we want to focus on its scattering amplitudes.

\section{Scattering amplitudes}

Now that a convenient interaction has been put in place, the classical S-matrix elements for an arbitrary number of legs (i.e.~the scattering amplitudes) can be derived. In this context, one can invoke the \emph{homological perturbation lemma} to obtain these.

\paragraph{Homological perturbation lemma}
In the standard treatment 
of QFT, Wick's theorem is used to express \emph{time-ordered products} of fields in terms of \emph{normal ordered products}
, which are then turned into Feynman diagrams and used to compute scattering amplitudes. The same result (though with different diagrams) is achieved with the homological perturbation lemma \cite{Crainic:2004bxw}, or also directly from considerations in homological algebra and the BV formalism as in \cite{Gwilliam:2012jg}. Sticking with the first route, let us demonstrate the construction directly in the case of the homotopy equivalence between the cobar $(\text{Sym}V^*, Q_S)$ and its cohomology. The first step is to perturb the differential $Q_S$ in a way that $Q_S+\delta_Q$ is nilpotent and the perturbation $\delta_Q$ is of the same homological degree as $Q_S$. With reference to the discussion and notation of \cref{On shell}, HPL ensures the existence of a homotopy equivalence for the perturbed data 
\[
\begin{tikzcd}[column sep=large, row sep=small,every label/.append style={font=\normalsize}]
 \arrow[l,"K'", loop left] (\text{Sym}V^*,Q_S+\delta_Q)  \arrow[r,shift left, "\Pi'"]   &(\text{Sym}W^*, Q'_S ) \arrow[l, shift left, "\mathcal{E}'" ]\;,
\end{tikzcd}
\]
where the maps are determined by those for the original homotopy equivalence and the perturbation $\delta_Q$.

\begin{lemma}\cite{Crainic:2004bxw}
For a homotopy equivalence and a map $\delta_Q: \text{Sym}V^*\rightarrow \text{Sym}V^*$ (perturbation of $Q_S$) such that $|\delta_Q|=|Q_S|$ and $(Q_S+\delta_Q)^2=0$ with invertible $(1-\delta_Q K)$, we obtain another homotopy equivalence given by:
\begin{align*}
    K'=&\; K+K(1-\delta_Q K)^{-1}\delta_Q K\;,\\
    \Pi'=&\; \Pi+ \Pi(1-\delta_Q K)^{-1}\delta_Q K\;,\\
    \mathcal{E}'=&\;\mathcal{E}+K(1-\delta_Q K)^{-1}\delta_Q \mathcal{E}\;,\\
    Q'_{S}=&\;Q_{(W)}+\Pi(1-\delta_Q K)^{-1} \delta_Q \mathcal{E}\;.
\end{align*}
\end{lemma}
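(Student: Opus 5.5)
The plan is to check the relations of a homotopy equivalence for the perturbed data by direct algebra. Write $\delta \equiv \delta_Q$ and $A \equiv (1-\delta K)^{-1}\delta$. Expanding the inverse as a geometric series gives the two identities
\[ A = \delta + \delta K A = \delta + A K \delta , \]
and these are the only features of $A$ we will use. In this notation the perturbed maps are $K' = K + KAK$, $\Pi' = \Pi + \Pi A K$, $\mathcal{E}' = \mathcal{E} + KA\mathcal{E}$ and $Q'_S = Q_{(W)} + \Pi A \mathcal{E}$. Throughout, the unperturbed relations are $\mathcal{E}\Pi = 1 - Q_S K - K Q_S$, $Q_S\mathcal{E} = \mathcal{E} Q_{(W)}$ and $\Pi Q_S = Q_{(W)}\Pi$; if we are in the special case, we also have $\Pi\mathcal{E}=1$ and $K\mathcal{E} = \Pi K = K^2 = 0$.

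The first step is a key identity controlling how $Q_S$ passes through $A$. Expand $(Q_S+\delta)^2=0$ using $Q_S^2=0$; this gives $Q_S\delta + \delta Q_S + \delta^2 = 0$. Substituting $Q_S K = 1 - \mathcal{E}\Pi - K Q_S$ into the expression $A Q_S + Q_S A$, written via the series expansion, should produce
\[ Q_S A + A Q_S = -A\mathcal{E}\Pi A + \text{(terms that cancel using } A=\delta+\delta K A\text{)} . \]
The expected closed form is $A(Q_S + \delta)\ldots$. Concretely, I would verify that $A Q_S + Q_S A + A\mathcal{E}\Pi A = 0$ modulo a rearrangement. This follows by multiplying on the left by $(1-\delta K)$ and on the right by $(1-K\delta)$, which reduces it to the stated consequence of $(Q_S+\delta)^2=0$.

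Next we check the individual relations.
\begin{itemize}
\item That $\mathcal{E}'$ and $\Pi'$ are chain maps, i.e. $(Q_S+\delta)\mathcal{E}' = \mathcal{E}' Q'_S$ and $\Pi'(Q_S+\delta) = Q'_S\Pi'$. Expanding the products, these follow from the key identity together with $A = \delta + \delta K A$.
\item That $(Q'_S)^2=0$. This follows from the chain-map property and from $\Pi'\mathcal{E}' = 1$ in the special deformation retract case. Alternatively, we can expand directly: $Q_{(W)}\Pi A\mathcal{E} + \Pi A\mathcal{E}Q_{(W)} = \Pi(Q_S A + A Q_S)\mathcal{E} = -\Pi A\mathcal{E}\Pi A\mathcal{E}$ by the key identity.
\item The homotopy relation $\mathcal{E}'\Pi' = 1 - (Q_S+\delta)K' - K'(Q_S+\delta)$. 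Expanding both sides, we collect the terms by their number of $A$'s. Terms with no $A$ reproduce the unperturbed relation. Terms linear in $A$ cancel against $\delta K + K\delta$ via $KAK$-type rewritings. Quadratic terms $KA\mathcal{E}\Pi AK$ are handled by the key identity.
\item The side conditions in the special case. These are immediate from the fact that every new term carries a $K$ on the outside: for example, $K'\mathcal{E}' = K\mathcal{E} + \ldots$, and every term contains $K\mathcal{E}$ or $KK$.
\end{itemize}

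The main obstacle is the bookkeeping in the homotopy relation: tracking the signs, which are trivial here since all maps involved have even total degree, and organizing the cancellation of the quadratic terms. I would first do the computation order by order in $\delta$, i.e. on the truncated series, to see the pattern. Then I would resum it using the two fixed-point identities for $A$.
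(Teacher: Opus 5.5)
Your route (the fixed-point identities $A=\delta_Q+\delta_QKA=\delta_Q+AK\delta_Q$ for $A=(1-\delta_QK)^{-1}\delta_Q$, one key identity for $Q_SA+AQ_S$, then term-by-term verification) is the standard proof of Crainic's lemma. The paper gives no proof of its own and only cites \cite{Crainic:2004bxw}. As written, however, your argument fails because of the normalization $\mathcal{E}\Pi=1-Q_SK-KQ_S$ you copied from the SDR section. Suppose $\mathcal{E}\Pi=1+\epsilon(Q_SK+KQ_S)$ with $\epsilon=\pm1$. Then
\[
(1-\delta_QK)\bigl(Q_SA+AQ_S+A\mathcal{E}\Pi A\bigr)(1-K\delta_Q)=(Q_S+\delta_Q)^2+(\epsilon-1)\,\delta_Q(Q_SK+KQ_S)\delta_Q ,
\]
so the key identity holds for $\epsilon=+1$ but not for your choice $\epsilon=-1$.

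The homotopy relation breaks as well. There, $\mathcal{E}\Pi AK+KA\mathcal{E}\Pi$ contributes $+(AK+KA)$, while $-(Q_S+\delta_Q)K'-K'(Q_S+\delta_Q)$ contributes $-(AK+KA)$. More careful bookkeeping within your convention cannot fix this, because with $\mathcal{E}\Pi=1-Q_SK-KQ_S$ the four formulas of the statement do not give a homotopy equivalence. Here is a counterexample:
\begin{itemize}
\item take $V$ spanned by $x$ in degree $0$ and $y$ in degree $1$, with $Q_Sx=y$ and $W=0$;
\item set $Ky=x$, so that $Q_SK+KQ_S=1$;
\item set $\delta_Qx=cy$ with $c\neq 0,1$.
\end{itemize}
Then $Ax=\tfrac{c}{1-c}\,y$ and $K'y=\tfrac{1}{1-c}\,x$. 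Hence $K'(Q_S+\delta_Q)x=\tfrac{1+c}{1-c}\,x\neq x$, whereas $\mathcal{E}'\Pi'=0$ would require $(Q_S+\delta_Q)K'+K'(Q_S+\delta_Q)=1$.

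You have two ways to fix this:
\begin{itemize}
\item use Crainic's normalization $\mathcal{E}\Pi=1+Q_SK+KQ_S$ (his $ip-1=bh+hb$); or
\item keep the paper's sign and replace $K$ by $-K$ in all four formulas, i.e.\ use $(1+\delta_QK)^{-1}$ and $K'=K-KAK$, $\Pi'=\Pi-\Pi AK$, $\mathcal{E}'=\mathcal{E}-KA\mathcal{E}$, $Q'_S=Q_{(W)}+\Pi A\mathcal{E}$.
\end{itemize}
Which relation actually holds on $\mathrm{Sym}\,V^*$ depends on how the signs in $Q_S=-\mathrm{d}^{\mathrm t}$ and in transposes are fixed, so you must choose it to match the formulas rather than import it.

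With the correct sign, the key identity is exact: $Q_SA+AQ_S=-A\mathcal{E}\Pi A$, with no leftover terms. The rest of your plan then goes through as you describe: the chain-map checks, $(Q'_S)^2=0$ by your direct expansion, the homotopy relation (where $K(Q_SA+AQ_S)K$ cancels $KA\mathcal{E}\Pi AK$), and the side conditions.

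Three smaller points:
\begin{itemize}
\item Cancelling the right factor $1-K\delta_Q$ requires it to be invertible; it is, with inverse $1+KA$.
\item Both fixed-point identities follow from invertibility of $1-\delta_QK$ alone, e.g.\ $A(1-K\delta_Q)=(1-\delta_QK)^{-1}(1-\delta_QK)\delta_Q=\delta_Q$, so no convergent geometric series is needed.
\item $Q_S$, $\delta_Q$, $K$ and $A$ are odd, not even. There are no Koszul signs only because just compositions of linear maps occur.
\end{itemize}
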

Here, the quasi-isomorphisms $\mathcal{E}$ and $\Pi$, as well as the homotopy $K$, are those defined in \cref{emb,proj,homotopy}, while $Q_{(W)}$ is notation for the differential on $\text{Sym}W^*$.

Now, if the homotopy equivalence is to the (symmetric algebra of the) cohomology, the differential $Q'_S$ reconstructs the scattering amplitudes at $n$ points: it does so by relating the on-shell legs by means of propagators.  
The perturbation of the differential $\delta_Q$ could be a collection of homotopy (Lie) algebra multiproducts, 
\begin{equation}
    \delta_Q =\sum_i\mu_i, \quad \text{ with e.g.} \;\;\mu_2 = [-, -]^\text{t}
    \label{delta_Q}
\end{equation} 
i.e.~the transpose of a Lie bracket on $V[-1]$. This explains the need to shift and dualize $V$: the transpose needs to induce a map $V^* \to \text{Sym}^2 V^*$. The expression for the differential in cohomology (where $Q_{(W)}$ from above is null) is: 
\begin{equation}
    Q'_{S} = \sum_n D'^{n}\;, \quad D'^{n} = \Pi \circ \, {\delta_Q \circ (K \circ \delta_Q)^{\otimes^n_S}} \circ \,\mathcal{E}\;.
    \label{Dprime}
\end{equation}
The components $D'^{n}$ are the transposes of the higher $L_\infty$ brackets and can be expressed by sums over tree Feynman diagrams, with the homotopy operator acting on inner edges, see \cite{marklTransferring} and \cite[Lemma 9.4.7]{Vallette-Loday}. 

Our natural choice for a deformation will be the Lie bracket on $\Lambda \mathfrak{su}(n)\otimes \mathfrak g$ constructed in the previous section. Now we are provided with all the technical tools needed. 
\subsection{Vanishing of tree-level effective action}\label{gen-vanishing}
Finally, we can combine together the above results to explain why in the effective action on cohomology all the tree-level Feynman diagrams with internal edges vanish. The only diagram which remains is the cubic vertex, which captures the obvious Lie algebra structure on cohomology of $\Lambda \mathfrak{su}(n)^* \otimes \mathfrak g$.

First, it was noted in \cite[Prop.~7]{Cattaneo:2009deo} that if the inclusion $e \colon H(\Lambda \mathfrak{su}(n)^*) \to \Lambda \mathfrak{su}(n)^*$ is an algebra morphism (or at least if the image of $e$ is closed under multiplication), then each tree with an internal edge automatically vanishes. Let us recall the argument: computing the perturbed differential using \eqref{Dprime}, the term $\delta_Q$ multiplies elements of $(\Lambda \mathfrak{su}(n)^*)[2] \otimes \mathfrak g$ in the image of $e$. If the image of $e$ is closed under multiplication, the operator $k$ annihilates this term (or any other term it possibly acts on), because of the side condition $k \circ e = 0$. 

Therefore, if we find a special deformation retract for which $e$ is an algebra morphism, all the non-trivial tree diagrams (i.e. with internal edges) vanish. The SDR of \cite{Meinrenken} we are using satisfies this condition, as explained below \cref{sdr-ce}.

\subsection{Homological perturbation theory for commutative and Lie algebras}
\label{ssec:operads}
Recall that $\Lambda^\bullet \mathfrak{s}^*$ is a commutative differential graded algebra (cdga) with a trace, or equivalently a pairing. The homotopy-invariant version of a cdga is a so-called $C_\infty$ (or homotopy commutative algebra), which is an $A_\infty$ algebra where the products $m_n$ satisfy additional commutativity relations
\begin{equation}
    \label{eq:Cinfty}
\begin{aligned} 
    m_2(a_1, a_2) - m_2(a_2, a_1) &= 0 \quad  \mid \quad \mathbf{(1,1)}  \\ 
        m_3(a_1, a_2, a_3) - m_3(a_2, a_1, a_3) + m_3(a_2, a_3, a_1) &= 0 \quad \mid \quad \mathbf{(1,2)} \\
    m_3(a_1, a_2, a_3) - m_3(a_1, a_3, a_2) + m_3(a_3, a_1, a_2) &= 0 \quad \mid \quad \mathbf{(2,1)} \\
    \dots &
\end{aligned}\end{equation}
Here the pairs $\mathbf{(p,q)}$ which index these identities denote $(p,q)$-shuffles, which are permutations preserving the order of the first $p$ and last $q$ elements (here $n=p+q$ in $m_n$). If $a_i$ have non-trivial degrees, in addition to the sign of the permutation, the obvious Koszul sign has to be considered in each term. See e.g. \cite[Def.~3]{DuboisViolettePopov2012} for a general formula and also \cite[Ch.~13.1]{Vallette-Loday}.

A cdga or a $C_\infty$ structure on a vector space $A$ can be encoded by a differential on the free Lie algebra on $A[1]^*$ and the usual $A_\infty$ homotopy transfer will induce a $C_\infty$ structure on $H(A)$ \cite[Thm.~12]{ChengGetzler}, \cite[Ch.~13.1.9]{Vallette-Loday}. Thus, in general, the cohomology of the complex $\Lambda^\bullet \mathfrak{s}^*$ will be a homotopy commutative algebra with zero differential and the product induced from the wedge product, and possibly higher products.
\medskip

To get to a more familiar setting for physics, we tensored the cdga $\Lambda^\bullet \mathfrak{s}^*$ with a Lie algebra, obtaining a dgla. Performing the homotopy transfer, we get an $L_\infty$ structure on $H(\Lambda^\bullet \mathfrak{s}^*)\otimes \mathfrak g$. This $L_\infty$ algebra is related to the above $C_\infty$ algebra on $H(\Lambda^\bullet \mathfrak{s}^*)$ as follows\footnote{We thank Pavel Mn\"ev for explaining this to us.} \cite{TurchinWillwacher, RobertNicoud}. First, one can tensor a $C_\infty$ algebra with a Lie algebra as described in \cite[Sec.~7.]{TurchinWillwacher}. The result is an $L_\infty$ algebra, and by \cite[Thm.~5.1]{RobertNicoud} it agrees with the $L_\infty$ algebra coming from the homotopy tranfer.

Let us exhibit this in the simplest case. In both the $C_
\infty$ and $L_\infty$ transfers, the multilinear operations are given by sums of trees. In the $C_\infty$ case, each tree $T$ is evaluated using the commutative product on our cdga to a multilinear operation $m_T$; whereas in the $L_\infty$ case each tree gives a multilinear operation $m_T \otimes l_T$, where $l_T$ uses the Lie algebra structure on $\mathfrak g$. When computing the transferred quadratic operation, there is a single cubic tree contributing. In this case it is obvious that the homotopy transfer commutes with the tensor product $\otimes \mathfrak g$, as both quadratic operations are given by $m_{H(\Lambda^\bullet\mathfrak s^*)}\otimes [-, -]_\mathfrak g$. 

For higher products, the formula for the tensor product of a $C_\infty$ algebra with $\mathfrak g$ decomposes a $C_\infty$ product terms indexed by trees and uses these trees to evaluate the $\mathfrak g$ factor, exactly reproducing the $L_\infty$ homotopy transfer.

Finally, we note that these homotopy are closely related to the \emph{chord} or \emph{Jacobi diagrams} which appear as amplitudes for theories of Chern-Simons type; their coefficients come from $C_\infty$-transfer while the diagrams themselves encode how to contract structure constants for arbitrary quadratic $\mathfrak g$. See e.g. \cite{Bar-Natan} or \cite[Sec.~4.7.]{Mnev-notes} and \cite{QiuZabzineOddCS} for a BV treatement.

\paragraph{Formality} A dg commutative algebra $A$ is called \emph{formal} if there is a $C_\infty$-morphism $A\to H(A)$ such that its first component (a chain map $A\to H(A)$) is a quasi-isomorphism, with $H(A)$ considered with the induced cga structure (i.e. no higher operations). Formality of (cochains on) $SU(n)$ is a special case of a result on the homotopy type of K\"{a}hler manifolds by Deligne--Griffiths--Morgan--Sullivan \cite{DGMS}.

Of course, if we perform a homotopy transfer from $A$ and we end up with no higher operations on $H(A)$, we have proven that $A$ is formal; the perturbed projections and inclusions are $C_\infty$ quasi-isomorphisms. On the other hand, however, the knowledge that an algebra is formal does not guarantee that, when doing a homotopy transfer, we get $H(A)$ with no higher operations. In principle, all possible choices of the special deformation retract could give non-trivial higher operations, all of them $C_\infty$ isomorphic to $H(A)$ with no higher operations. It would be interesting to find an example where this happens. See also \cite[Remark~1]{markl2021} for a related discussion.
\smallskip

We note that the requiring that there exists an algebra inclusion $H(A) \to A$, which we exploited, is much stronger than formality.

\subsection{Other constructive proofs}

We learnt that the theory only admits the interaction vertices. Pretending we were not aware of this fact, we would still like to expand a bit on the HPL diagrams and show that they are null. 

First of all, let us be more explicit about the tensor trick and deformation. We will be lax regarding the signs and the numerical factors: after all, our considerations do not require a firm handle neither on the signs nor on the factors. Recall that one defines the perturbation coming from the Lie bracket as follows. For $\alpha \otimes x \in E^* \otimes \mathfrak g^*$ and $\varsigma_1 \otimes t_1, \varsigma_2 \otimes t_2 \in E \otimes \mathfrak{g}$
\begin{equation} \label{eq:duality}
   \left\langle \mu_2(\alpha \otimes x) \mid \varsigma_1 \otimes t_1 \otimes_S \varsigma_2 \otimes t_2 \right \rangle:= (-1)^{|\varsigma_1\otimes t_1|+|t_1|(|\varsigma_2|+1)} \left\langle \alpha \otimes x \mid \varsigma_1 \wedge \varsigma_2 \otimes \;[t_1,t_2]\right \rangle
\end{equation}
and extend to $\delta_Q$ on the symmetric algebra by the Leibniz rule. One should check that $(Q_S+\delta_Q)^2=0$, this follows by repeatedly using \eqref{eq:duality} and reduces to the compatibility of the Lie bracket and the differential on $E_\mathfrak g [-1]$ (for $[Q_S, \delta_Q]=0$) and Jacobi identity (for $\delta_Q^2=0$). 

More generally, the pairing $\text{Sym}E_\mathfrak g^*$ with $\text{Sym} E_\mathfrak g$ allows us to transport differentials on the left to \emph{codifferentials} on the right, similarly to \eqref{eq:duality}. Doing this for the pertubed differential on $\text{Sym} W^*$ in \eqref{Dprime}, one obtains a practical formula for the higher $L_\infty$ brackets of $W$ in terms of trees. These trees have leaves decorated by $e$, internal vertices by the Lie bracket, internal edges by the homotopy operator $k$ and the root by the projection; see e.g. \cite{marklTransferring} and \cite[Ch.~10.3]{Vallette-Loday}.

\paragraph{$\mathfrak{su}(2)$} It is worth, for a start, to notice that in the symmetrizer of the cohomology many elements actually vanish:
\[
\text{Sym}^n(\mathfrak{g}[2] \oplus \mathfrak{g}[-1]) = \bigoplus_k\text{Sym}^{n-k} \mathfrak{g}[2] \otimes_S \Lambda^{k} \mathfrak{g}[-1]\;.
\]
Note that here we are using the fact that symmetric powers of parity odd elements are actually antisymmetric powers. Concerning the parity, all $k$'s will work. However, if the dimension of $\mathfrak{g}$ is $d$, then when $k> d$ the forms are zero.

If $m_{n+2}$ is defined in the expression (obtained by expanding a geometric series to $n$-th order) for the $n$-th component of the new differential on the cohomology (with appropriate $\Pi$, $K$ and $\mathcal{E}$):
\begin{equation}
    D'^{(n)} = \Pi \circ \, \underbrace{\delta_Q \circ (K \circ \delta_Q)^{\otimes^n_S}}_{=:m_{n+2}} \circ \,\mathcal{E}
\end{equation}
then only $m_2, m_5, m_8 ... $ are non-zero, because only jumps by 3 are possible, as there is no cohomology in the degrees in between. Of course, the interaction vertex is given by $m_2$, while other tree level functions start from $m_5$ (6-point function).

We obtain the following result: 
\begin{proposition}\label{only-punctual-int}
The BV 
theory $(\textnormal{Sym}^\bullet(\Lambda \mathfrak{su}(2)^*[1]\otimes \mathfrak{g})^*, Q_S+\delta_Q, (\!(-,- )\!))$ admits only the tree-level three-point function $\mu_2$. 
\end{proposition}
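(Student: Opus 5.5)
The plan is to show that every transferred higher product $m_{n+2}$ with $n\geq 1$, i.e.\ every tree with at least one internal edge, vanishes identically, while $m_2$ survives and equals the obvious bracket on $H\otimes\mathfrak g$. For $\mathfrak{su}(2)$ the special deformation retract onto cohomology is unique (the case $\dim I=3$ above, observed by Cattaneo and Mn\"ev). Its homotopy $k$ sends $e^ie^j\mapsto \epsilon_{ij}{}^l e^l$ and kills $\Lambda^0$, $\Lambda^1$ and $\Lambda^3$. The inclusion $e$ maps $H^0\cong\mathbb C\cdot 1$ and $H^3\cong \mathbb C\cdot e^1e^2e^3$ into $\Lambda^\bullet\mathfrak{su}(2)^*$.

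The first step is to check that $\operatorname{im} e=\Lambda^0\oplus\Lambda^3$ is closed under the wedge product. Indeed $1\wedge 1=1$, $1\wedge \mathrm{vol}=\mathrm{vol}$, and $\mathrm{vol}\wedge\mathrm{vol}=0$. So $e$ is an algebra morphism; this also agrees with the general statement below \cref{sdr-ce}, since this SDR coincides with Meinrenken's one, both being the unique cyclic SDR onto cohomology. The second step invokes the argument of \cite[Prop.~7]{Cattaneo:2009deo} recalled in Section~\ref{gen-vanishing}. In the tree expansion of $D'^{(n)}$ in \eqref{Dprime}, dualized to brackets on $W$, consider any internal vertex all of whose inputs are leaves. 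It produces $(e\varsigma_1\wedge e\varsigma_2)\otimes[t_1,t_2]$, which lies in $\operatorname{im}e\otimes\mathfrak g$. The internal edge leaving it then applies $k\otimes 1$, which vanishes by the side condition $k\circ e=0$. Every tree with an internal edge has such a vertex (take a vertex of maximal depth), so $m_{n+2}=0$ for $n\ge1$. Tensoring with $\mathfrak g$ is harmless because $\mathfrak g$ carries the trivial SDR, so $K$ acts as $k\otimes 1$ on each tensor factor, as discussed in Section~\ref{ssec:operads}.

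As an independent cross-check, I would also run the degree argument already given. Only $m_2, m_5, m_8,\dots$ can be nonzero, so it suffices to kill $m_5$ and higher, which the vertex argument does uniformly. One can also see directly that $k$ only has nonzero component $\Lambda^2\to\Lambda^1$, while products of two leaves land in $\Lambda^0$ or $\Lambda^3$ (or vanish), so $k$ kills them. The surviving term $D'^{(0)}=\Pi\,\delta_Q\,\mathcal E$ is the transpose of $\wedge\otimes[-,-]$ restricted to cohomology, i.e.\ $\mu_2$.

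The main obstacle is the bookkeeping at the level of $\text{Sym}^\bullet$. $K^{(q)}$ in \eqref{homotopy} is not simply $k^{\rm t}$ applied to one edge, but a symmetrized sum with $p^{\rm t}e^{\rm t}$ factors. One must argue that in $\Pi\,\delta_Q(K\delta_Q)^n\mathcal E$ each surviving term corresponds to a tree whose edges carry $k$. I would handle this by appealing to the standard tree formula \cite{marklTransferring}, \cite[Lemma 9.4.7]{Vallette-Loday}, rather than expanding the symmetric algebra directly. Signs and symmetry factors are irrelevant, since every term vanishes individually.
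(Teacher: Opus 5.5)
Your proposal is correct and essentially matches the paper. Your ``direct'' check is exactly the paper's proof: $k$ is nonzero only on $\Lambda^2$, and products of leaves from $\Lambda^0\oplus\Lambda^3$ stay in $\Lambda^0\oplus\Lambda^3$ or vanish. Your main argument via closure of $\operatorname{im}e$ and $k\circ e=0$ (the Cattaneo--Mn\"ev argument of Section~\ref{gen-vanishing}) is the same observation rephrased, since here $\operatorname{im}e=\Lambda^0\oplus\Lambda^3$ is precisely where $k$ vanishes. Your deepest-vertex argument and your appeal to the standard tree formula for the $\mathrm{Sym}$ bookkeeping make precise what the paper leaves implicit.
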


\begin{proof} A $n$-point scattering amplitude is a process that sends $n-1$ fields into a final state, $\phi_1\dots \phi_{n-1} \mapsto \phi_n$. These fields are originally on-shell, i.e.~in cohomology, which is non-trivial only on the $0$- or the $3$-forms. Noting this, the proof is immediate: \emph{Regardless of the outcome of multiplication}, the absence of propagators $k$ between $3$-forms and $2$-forms in this theory (the $0$-form has no propagator on its own) gives us a zero result. This picture of the SDR on $E_\mathfrak g$ should illustrate the proof. 

   \begin{equation*}
        \begin{matrix}
            {\color{orange}{\Lambda^0 \mathfrak{su}_2^*  \otimes \mathfrak{g} }} &\underset{\xrightarrow{\mathrm{d}=0}}{{{\xleftarrow{k=0}}}} & {\Lambda^1 \mathfrak{su}_2^*  \otimes \mathfrak{g}} &\underset{\xrightarrow{\mathrm{d}}}{{{\xleftarrow{k}}}} & \Lambda^2 \mathfrak{su}_2^*  \otimes \mathfrak{g}& \underset{\xrightarrow{\mathrm{d}=0}}{{\color{MidnightBlue}{\xleftarrow{k=0}}}} & {\color{MidnightBlue}{\Lambda^3 \mathfrak{su}_2^*  \otimes \mathfrak{g}}} \\ 
           {\int \ast  = \pi}  \downarrow \, {\color{orange}{\uparrow e}}& &{{0  \downarrow}} \, \uparrow e & &  0 \downarrow \, \uparrow e& &{\int  = \pi}  \downarrow \, {\color{MidnightBlue}{\uparrow e}}\\
            {\color{orange}{\mathbb{C}[1] \otimes \mathfrak{g}}} & & {{\emptyset \otimes \mathfrak{g}}} & &\emptyset \otimes \mathfrak{g} & & {\color{MidnightBlue}{\mathbb{C}[-2]\otimes \mathfrak{g}}}
        \end{matrix}
        \end{equation*}
It is intended that all maps are tensored by id, so they act trivially on the dual Lie algebra. Since we are forced to start from the cohomology, non-trivial in degree $0$ and $3$, and subsequently multiply/take the Lie bracket, the outcome is that we always end up in the trivial, null situation. The homotopy $k$ is simply zero in there. This in turn implies that also applying the homotopy $K$ for the tensor trick sends us to zero. The conclusion is that only $m_2=\mu_2$, the interaction vertex, is non-zero. 
\end{proof}

The $\mathfrak{su}(2)$ field theory has only the diagrams:
\begin{equation}
\begin{tikzpicture}[baseline={([yshift=-.5ex]current bounding box.center)}]
            \matrix (m) [matrix of nodes, ampersand replacement=\&, column sep = 0.15cm, row sep = 0.2cm]{
                $\phi^{[0]}$ \& {}  \& $\phi^{[0]}$ 
                {} \& {} \& {} \\
                {} \& {} \& {} \\
                {} \& {} \& {} \\
                {} \& $\phi^{[0]}$ \& {} \\
            };
            \draw (m-3-2.center) -- (m-1-1) ;
            \draw (m-3-2.center) -- (m-1-3) ;
            \draw (m-3-2.center) -- (m-4-2) ;
        \end{tikzpicture} ~,~ 
        \begin{tikzpicture}[baseline={([yshift=-.5ex]current bounding box.center)}]
            \matrix (m) [matrix of nodes, ampersand replacement=\&, column sep = 0.15cm, row sep = 0.2cm]{
                $\phi^{[0]}$ \& {}  \& $\phi^{[3]}$ 
                {} \& {} \& {} \\
                {} \& {} \& {} \\
                {} \& {} \& {} \\
                {} \& $\phi^{[3]}$ \& {} \\
            };
            \draw (m-3-2.center) -- (m-1-1) ;
            \draw (m-3-2.center) -- (m-1-3) ;
            \draw (m-3-2.center) -- (m-4-2) ;
        \end{tikzpicture} \,.
\end{equation}

The above diagrams are easily recognizable. In Chern--Simons theory, using the familiar notation where $A\in (\Lambda^1\mathfrak{su}(2)^* \otimes \mathfrak{g})[1] $ and $C \in (\Lambda^0\mathfrak{su}(2)^* \otimes \mathfrak{g})[1]$ with the antifields carrying a superscript, they are due to the ghost-antighost field term (first summand of the Chern--Simons action)
\[
S= \langle C^*, \frac 1 2 [C,C]\rangle + {\langle A, \mathrm{d}A + \frac 1 2 [A,A]}\rangle + {\langle A^*, \mathrm{d}C + [A,C]\rangle} \, .
\]

{Before moving to further considerations, we would like to highlight that this proof neglects the outcome of multiplication while it relies only on the explicit form of the propagator.

\subsection{Partial transfer and non-trivial homotopy algebras}

So far, we have discussed homotopy transfer onto cohomology. We can also transfer
onto a bigger space $W$, in this case we can obtain non-trivial higher operations, i.e.
non-trivial scattering. 

 The pairing $\langle -, \mathrm d -\rangle$ on $E=(\Lambda^\bullet \mathfrak{su}(2)^*)[1]$ is the standard scalar product on $\mathbb R^3$ on $E_0$ and zero otherwise. A special deformation retract is completely specified by a vector subspace $I \subset \Lambda^1\mathfrak{su}(2)^*$ as described in Section \ref{On shell}. Let us now describe the transferred $C_\infty$ structures.

     \paragraph {$\dim I = 0$:} the reduction is $E$ itself and the transferred algebra is equal to the original one
     \paragraph{$\dim I = 3$:} we transfer onto homology as described above.
     \paragraph{$\dim I = 2$:} we transfer onto a space \[ \mathbb C[1] \oplus \mathbb C \oplus \mathbb C[-1] \oplus \mathbb C[-2],\] the higher product are all zero since they involve products of 1-forms which are all proportional to each other.
     \paragraph{$\dim I = 1$:} The most interesting case is $\dim I = 1$, we transfer onto the space  \[ W = \mathbb C[1] \oplus \underbrace{\mathbb C^2 \oplus \mathbb C^2[-1]}_{I^\perp} \oplus \, \mathbb C[-2]\] where the plane $W_0$ is orthogonal to $I$ in $\mathbb C^3$. We can further  identify the degree 1 component of $E$ with $\mathbb C^3$ using $\mathrm d$, i.e. using the basis $e^2e^3, e^3e^1, e^1e^2$. Using these identifications, the wedge product becomes the cross product of (complex) three-dimensional vectors, and the homotopy operator $k$ is the identity; $p$ and $i$ are orthogonal projections onto the planes $I^\perp$. 

    In particular, for $I = \mathbb C\langle e^3 \rangle$, we have
     \[
     k: e_1e_2 \mapsto e_3
     \]
     so the projector $p$ acts on $\Lambda^1$ and $\Lambda^2$ by forgetting the above terms.

    Performing homotopy transfer gives us a non-trivial triple product of vectors in $W_0$: if $a, b, c \in R_0\cong I^\perp \subset \mathbb C^3$, there are two non-trivial trees giving a  triple product
    \[ \tilde{m}_3(a,b,c) = (a\times b) \times c + a\times (b\times c) \]
    seen as a vector in $W_1$ orthogonal to $I$. This is not zero, for example again for $I = \mathbb C\langle e^3 \rangle$
    \[ \tilde{m}_3(e^1, e^2, e^1) = 2 e^2. \]
    Other possibilities of evaluating the triple product (i.e. with inputs from other degrees) are zero.\footnote{We recommend \cite{marklTransferring} for explicit signs for homotopy transfer; see also the formula \cite[Lemma 9.4.1]{Vallette-Loday} but there a Koszul sign is missing.} The reader is invited to check that this satisfies  the $C_\infty$ symmetry \eqref{eq:Cinfty}. 

   There are non-trivial binary products given by the diagram with a single vertex, we only note that the binary product vanishes on one-forms in $I^\perp$, i.e. $W_0$. 
    
    Finally, let us explain why quartic and higher products vanish. Let us note that
    \begin{align*}
         I\times I &= 0  \\
        I^\perp \times I &\subset I^\perp \\
        I^\perp \times I^\perp &\subset I
    \end{align*}
   We start with $I^\perp$ seen as $1$-forms. Binary vertices also give us one-forms in $I$, by the last line. Since $I\times I = 0$, this $I$ can be combined with $I^\perp$ only, giving $I^\perp$. This has to be followed by the projection, as the homotopy is zero on (two-forms in) $I^\perp$. This is our triple product. 
   \begin{equation}
\begin{tikzpicture}[baseline={([yshift=-.5ex]current bounding box.center)}]
            \matrix (m) [matrix of nodes, ampersand replacement=\&, column sep = 0.15cm, row sep = 0.2cm]{
                {} \& $I^\perp$ \& $I^\perp$   \& $I^\perp$ 
                {} \& {} \& {} \& {}  \\
                {} \& {} \& {} \& {} \\
                {} \& {} \& {} \& {} \\
                $I^\perp$ \& {} \& {}  \& {}\\
            };
            \draw (m-1-4) -- (m-2-3.center) ;
            \draw (m-1-3) -- (m-2-3.center) ;
            \draw[ambi] (m-2-3.center) -- (m-3-2.center) ;
            \draw (m-1-2) -- (m-3-2.center) ;
            \draw (m-4-1) -- (m-3-2.center) ;
        \end{tikzpicture}
\end{equation}
The diagram illustrates how the only way to obtain a non-zero result is by feeding $I^\perp$ to an internal propagator $k$ (dashed line), and then projecting on-shell.
\smallskip

A similar case-by-case inductive reasoning shows that there are no higher products involving the unit, the top-form or two-forms.

\paragraph{$\mathfrak{su}(3)$}

Note that in this case, all $m_n$ are in principle allowed: we have no numerical considerations to impose to look at just some $m_n$'s with fixed $n$.

Let us proceed to convince ourselves that the diagrams are null. We consider, in \cref{table-modules}, the modules with weights $(l,p)$ in the decomposition of the exterior algebra for $\mathfrak{sl}(3)$, the complexification of $\mathfrak{su}(3)$. The singlet $V(0,0)$ describes cohomology, while the rest are other irreducible spaces.

\begin{table}[h!]
    \centering
    \begin{tabular}{c|c}
        $\Lambda^4$ & $\{V(1,1) , V(2,2)\}$ multiplicity $2$\\
        \hline
        $\Lambda^3$, $\Lambda^5$ & $V(0,0)$, $V(1,1)$, $V(2,2)$, $V(3,0)$, $V(0,3)$\\
        \hline
        $\Lambda^2$, $\Lambda^6$ & $V(1,1)$, $V(3,0)$, $V(0,3)$\\
        \hline
        $\Lambda^1$, $\Lambda^7$ & $V(1,1)$\\
        \hline
        $\Lambda^0$, $\Lambda^8$ & $V(0,0)$\\
    \end{tabular}
    \caption{Modules labeled by their weights.}
    \label{table-modules}
\end{table}

According to \cref{table-modules}, we are guaranteed the existence of a non-zero homotopy for the Chevalley--Eilenberg differential $\mathrm{d}$ between all $j$ and $j-1$ forms, other than between $8$ and $7$ or between $1$ and $0$. This sounds promising, however it will not be sufficient.

\noindent All seemingly plausible diagrams with an internal propagator go to zero. The reason is that the product involves the singlet, while the propagator can only be fed elements in the complement to the cohomology. In other words, an application of $\delta_Q$ multiplies two singlets $V(0,0)$ while the propagator involves all the other modules, so when these are combined, the end result is zero. 
Pictorially, 
\begin{equation}
        \begin{tikzpicture}[baseline={([yshift=-.5ex]current bounding box.center)}]
            \matrix (m) [matrix of nodes, ampersand replacement=\&, column sep = 0.15cm, row sep = 0.2cm]{
                {} \& $\phi^{[0]}$ \& $\phi^{[0]}$  \& $\phi^{[3]}$ \\
                {} \& {} \& {} \& {}  \& {} \\
                {} \& {} \& {} \& {}  \& {} \\              
            };
            \draw (m-2-3.center) -- (m-1-4) ;
            \draw (m-2-3.center) -- (m-1-3) ;
            \draw[int] (m-3-2.center) -- (m-2-3.center) ;
            \draw (m-3-2.center) -- (m-1-2) ;
        \end{tikzpicture} \;.
        \end{equation}
The dashed line symbolises the absence of a non-zero propagator between $V(0,0)$ and a module for $\Lambda^2$. To sum up, we have proven a version of \ref{only-punctual-int} for $\mathfrak{su}(3)$, recovering the argument about the embedding of invariants from representation theory of $\mathfrak{su}(3)$.
\begin{proposition}
The BV 
theory $(\text{Sym}^\bullet(\Lambda \mathfrak{su}(3)^*\otimes \mathfrak{g})^*, Q_S+\delta_Q, (\!(-,- )\!))$ admits only the tree-level three-point function $\mu_2$.
\end{proposition}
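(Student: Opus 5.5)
We prove the statement by the same mechanism as in Section~\ref{gen-vanishing}, specialised to $\mathfrak{s}=\mathfrak{su}(3)$ and organised along the tree expansion \eqref{Dprime}. First we fix the cyclic special deformation retract \eqref{sdr-ce} built from Meinrenken's homotopy $k=k_D\circ\partial$, tensored with $\mathrm{id}_{\mathfrak g}$ and extended to $\mathrm{Sym}$ by the tensor trick \eqref{tensor-trick}. By the proposition proved above, this retract is cyclic for the integral pairing. Here $e$ is the inclusion of the invariants $(\Lambda\mathfrak{su}(3)^*)^{\mathfrak{su}(3)}\cong\Lambda[x_3,x_5]$ and $p$ is the projection along \eqref{eq:reductive}. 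The side condition $k\circ e=0$ says that $k$ vanishes on the invariant summand. Equivalently, $k$ is supported on $\mathfrak{s}.\Lambda\mathfrak{s}^*$, i.e.\ on the non-singlet modules of \cref{table-modules}.

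Next we expand $D'^{n}=\Pi\circ\delta_Q\circ(K\circ\delta_Q)^{n}\circ\mathcal{E}$ into trees. By the dualisation \eqref{eq:duality}, these trees have leaves decorated by $e\otimes 1$, vertices by $\wedge\otimes[-,-]_{\mathfrak g}$, internal edges by $k\otimes 1$, and the root by $p\otimes 1$. Every tree with at least one internal edge has a lowest internal vertex whose two inputs are both leaves. Its output is therefore $(e(h_1)\wedge e(h_2))\otimes[t_1,t_2]$ with $h_i$ invariant.

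The key step is to show that this output lies in the image of $e\otimes 1$. The wedge product of two $\mathfrak{su}(3)$-invariant forms is again invariant, because $\mathfrak s$ acts on $\Lambda\mathfrak s^*$ by derivations. Equivalently, the tensor product of two copies of the singlet $V(0,0)$ contains only $V(0,0)$, so no other module of \cref{table-modules} can appear. Hence the internal edge above this vertex evaluates to $(k\circ e)\otimes 1=0$, independently of the $\mathfrak g$-factor. Since the trees are multilinear, the whole tree vanishes. The only surviving term is the tree with no internal edge, which is $D'^{0}=\Pi\,\delta_Q\,\mathcal E$, i.e.\ the transferred bracket $\mu_2$, given by the wedge product on $H\otimes\mathfrak g$.

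We expect the main obstacle to be bookkeeping rather than substance. We must ensure that the dual (cobar) formulation with $K$ from \eqref{homotopy}, including its factors $(p^{\mathrm t}e^{\mathrm t})$, really reduces to the tree formula with $k$ on internal edges. We plan to handle this by invoking \cite{marklTransferring} and \cite[Lemma 9.4.7]{Vallette-Loday}, together with Section~\ref{ssec:operads}, to commute the transfer with $\otimes\mathfrak g$. We also need the degree shifts of Section~\ref{ssec:cyclic}, with $\mathfrak g$ carrying a pairing of degree $5$, to change only signs and not the vanishing. Finally, we will remark that the conclusion $K\circ\delta_Q\circ\mathcal E=0$ already holds at the first step, so no convergence question for $(1-\delta_QK)^{-1}$ arises.
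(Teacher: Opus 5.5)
Your main argument is correct and is the paper's own. The paper's proof is the mechanism of Section~\ref{gen-vanishing}: the image of $e$ is closed under $\wedge$, and $k\circ e=0$. For $\mathfrak{su}(3)$ the paper phrases this as ``$\delta_Q$ multiplies two singlets $V(0,0)$, while the propagator only sees the other modules''. Your observation that every tree with an internal edge has a vertex fed by two leaves, whose output goes into $k$, is what makes that phrasing precise.

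The closing remark, however, is false as stated. In the cobar conventions \eqref{emb}--\eqref{homotopy}, $\mathcal{E}$ is built from $p^{\mathrm t}$ and sits at the \emph{root} of the trees. $\Pi$ is built from $e^{\mathrm t}$ and sits at the leaves. So $K\circ\delta_Q\circ\mathcal{E}$ dualises the root vertex followed by one internal edge, with an arbitrary second input, and it does not vanish.

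To see this, suppose it did vanish. Then $\mathcal{E}'=\mathcal{E}$. Evaluate the HPL relation $(Q_S+\delta_Q)\mathcal{E}'=\mathcal{E}'Q'_S$ on $\phi\in W^*$ and take its $\mathrm{Sym}^2$ component. It would say $p\,\mu_2(v_1,v_2)=p\,\mu_2(epv_1,epv_2)$ for all $v_1,v_2$. Now take $v_1=\alpha\otimes t_1$ and $v_2=\beta\otimes t_2$, with $\alpha\in\Lambda^1$, $\beta\in\Lambda^7$, $\alpha\wedge\beta$ a volume form and $[t_1,t_2]\neq0$. Then $p\alpha=p\beta=0$, but $p(\alpha\wedge\beta)\neq0$, a contradiction. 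This non-vanishing is the non-trivial higher component of the $L_\infty$ projection $V\to W$.

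The identity that does hold is $\Pi\circ\delta_Q\circ K=0$. Each term of $K$ contains exactly one factor $k^{\mathrm t}$. After the derivation $\delta_Q$ acts, that factor is hit either by $e^{\mathrm t}$, which vanishes since $ke=0$, or by $(e^{\mathrm t}\otimes e^{\mathrm t})\mu_2^{\mathrm t}$, which vanishes since $k\circ\mu_2\circ(e\otimes e)=0$. This gives $\Pi'=\Pi$ and $Q'_S=\Pi\,\delta_Q\,\mathcal{E}$ directly, with no geometric series. That is the correct version of your remark; the proposition itself is unaffected.
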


The $\mathfrak{su}(3)$ field theory has only the diagrams:
\begin{align}
\begin{tikzpicture}[baseline={([yshift=-.5ex]current bounding box.center)}]
            \matrix (m) [matrix of nodes, ampersand replacement=\&, column sep = 0.15cm, row sep = 0.2cm]{
                $\phi^{[0]}$ \& {}  \& $\phi^{[0]}$ 
                {} \& {} \& {} \\
                {} \& {} \& {} \\
                {} \& {} \& {} \\
                {} \& $\phi^{[0]}$ \& {} \\
            };
            \draw (m-3-2.center) -- (m-1-1) ;
            \draw (m-3-2.center) -- (m-1-3) ;
            \draw (m-3-2.center) -- (m-4-2) ;
        \end{tikzpicture} ~,~ 
        \begin{tikzpicture}[baseline={([yshift=-.5ex]current bounding box.center)}]
            \matrix (m) [matrix of nodes, ampersand replacement=\&, column sep = 0.15cm, row sep = 0.2cm]{
                $\phi^{[0]}$ \& {}  \& $\phi^{[3]}$ 
                {} \& {} \& {} \\
                {} \& {} \& {} \\
                {} \& {} \& {} \\
                {} \& $\phi^{[3]}$ \& {} \\
            };
            \draw (m-3-2.center) -- (m-1-1) ;
            \draw (m-3-2.center) -- (m-1-3) ;
            \draw (m-3-2.center) -- (m-4-2) ;
        \end{tikzpicture} 
        ~,~ 
        \begin{tikzpicture}[baseline={([yshift=-.5ex]current bounding box.center)}]
            \matrix (m) [matrix of nodes, ampersand replacement=\&, column sep = 0.15cm, row sep = 0.2cm]{
                $\phi^{[0]}$ \& {}  \& $\phi^{[5]}$ 
                {} \& {} \& {} \\
                {} \& {} \& {} \\
                {} \& {} \& {} \\
                {} \& $\phi^{[5]}$ \& {} \\
            };
            \draw (m-3-2.center) -- (m-1-1) ;
            \draw (m-3-2.center) -- (m-1-3) ;
            \draw (m-3-2.center) -- (m-4-2) ;
        \end{tikzpicture}
        ~,~ \notag \\
        \begin{tikzpicture}[baseline={([yshift=-.5ex]current bounding box.center)}]
            \matrix (m) [matrix of nodes, ampersand replacement=\&, column sep = 0.15cm, row sep = 0.2cm]{
                $\phi^{[0]}$ \& {}  \& $\phi^{[8]}$ 
                {} \& {} \& {} \\
                {} \& {} \& {} \\
                {} \& {} \& {} \\
                {} \& $\phi^{[8]}$ \& {} \\
            };
            \draw (m-3-2.center) -- (m-1-1) ;
            \draw (m-3-2.center) -- (m-1-3) ;
            \draw (m-3-2.center) -- (m-4-2) ;
        \end{tikzpicture}
        ~,~ 
        \begin{tikzpicture}[baseline={([yshift=-.5ex]current bounding box.center)}]
            \matrix (m) [matrix of nodes, ampersand replacement=\&, column sep = 0.15cm, row sep = 0.2cm]{
                $\phi^{[3]}$ \& {}  \& $\phi^{[5]}$ 
                {} \& {} \& {} \\
                {} \& {} \& {} \\
                {} \& {} \& {} \\
                {} \& $\phi^{[8]}$ \& {} \\
            };
            \draw (m-3-2.center) -- (m-1-1) ;
            \draw (m-3-2.center) -- (m-1-3) ;
            \draw (m-3-2.center) -- (m-4-2) ;
        \end{tikzpicture}
        \,.
\end{align}

\subsection{Considerations about the loop level}
Let us finish with some comments about the quantum corrections to the calculations described above. First, it was noted in \cite{Cattaneo:2009deo} that in order for the action $S$ to satisfy the quantum master equation
\[ \hbar \Delta S + \frac 12 \{ S, S\} = 0 \;,\]
the Lie algebra $\mathfrak g$ needs to be unimodular (the operator $\text{ad}_x \colon \mathfrak g \to \mathfrak g$ has zero trace for each $x\in \mathfrak g$). Let us supplement this observation with noting that if $\mathfrak g$ is not unimodular, it is not possible to correct $S$ by $\hbar$-dependent corrections to a quantum master action. 
To argue this, let us choose a basis of the dual of $\Lambda \mathfrak{su}(2)^*\otimes \mathfrak g [1]$ given as 
\begin{align*}
    u_\alpha &\text{ basis of } \Lambda^0; \text{ degree }1 \\
    x^i_\alpha &\text{ basis of } \Lambda^1; \text{ degree }0\\
    y^i_\alpha &\text{ basis of } \Lambda^2; \text{ degree }-1\\
    v_\alpha &\text{ basis of } \Lambda^3; \text{ degree }-2
\end{align*}
Here $i=1,2,3$ is the $\mathfrak{su}(2)$ index, while $\alpha$ is the $\mathfrak{g}$ index; the basis vectors $x^i$ are dual to $e^i$ and $y^i$ dual to $\tfrac 12 \epsilon_{ijk}e^je^k$. The classical BV action is
\begin{equation}
    \label{eq:classicalBVaction}
    S = t^{\alpha\beta} x^i_\alpha x^i_\beta + f^{\alpha \beta \gamma}\left(\frac{1}{3!}x^i_\alpha x^j_\beta x^k_\gamma \epsilon_{ijk}+ u_\alpha x^i_\beta y^i_\gamma +\frac{1}{2}u_\alpha u_\beta v_\gamma\right)
\end{equation}
while the BV operator is
\[ \Delta = t_{\alpha\beta}\sum_i \frac{\partial^2}{\partial x^i_\alpha \partial y^i_\beta} + t_{\alpha\beta} \frac{\partial^2}{\partial u_\alpha \partial v_\beta} \]
Here, $t^{\alpha\beta}$ are the matrix coefficients of the pairing on $\mathfrak g$, while $f^{\alpha\beta\gamma}$ are the (appropriately lifted) structure constants of $\mathfrak g$. 

Computing $\Delta S$, we get a term proportional to 
\[ f^{\alpha \beta \gamma}t_{\alpha\beta} u_\gamma\]
which leads to the condition on unimodularity. This term would need to be counteracted by another action  $S'$ which would be proportional to $\hbar$ and linear in fields, so that \[ \{ S', t^{\alpha\beta} x^i_\alpha x^i_\beta\} + \hbar \Delta S = 0 \]
However, we cannot choose $S'$ so that the first term is proportional to $u_\alpha$ (essentially since $u_\alpha$ is nontrivial in the cohomology).  

\paragraph{Quantum corrections to effective actions.} In the $\mathfrak{su}(2)$ case (and others), Cattaneo and Mn\"{e}v have computed all quantum corrections to the effective action on cohomology, see \cite[Example~2]{Cattaneo:2009deo}. Let us sketch the quantum corrections in the case we transfer onto a bigger subspace using a one-dimensional isotrope $I = \mathbb C\langle e^3\rangle$. The propagator is 
\[ t_{\alpha\beta}\frac{\partial^2}{\partial x^3_\alpha \partial x^3_\beta}\]
and the BV fiber integral is computed by setting $y^3=0$ and contracting all $x^3$ using the propagator. Apart from the tree level diagrams, we get no higher quantum corrections.

{With $I= \mathbb{C}\langle e^1,e^2\rangle$, the propagator $t_{\alpha\beta}\frac{\partial^2}{\partial x^1_\alpha \partial x^1_\beta} + t_{\alpha\beta}\frac{\partial^2}{\partial x^2_\alpha \partial x^2_\beta}$ and with the BV integral computed by setting $y^1=0=y^2$, the only relevant term in the interacting action is $x^i_\alpha x^j_\beta x^k_\gamma \epsilon_{ijk}$, which means we have a three-valent vertex with half-edges labelled by $1, 2, 3$. The only non-trivial diagrams are wheels of even diameter, with the inner edges alternating between $1$ and $2$, e.g.:
     \begin{equation}\label{diag:2-loops}
            \begin{tikzpicture}[baseline={([yshift=-.5ex]current bounding box.center)}]
                \matrix (m) [matrix of nodes, ampersand replacement=\&, column sep = 0.2cm, row sep = 0.2cm]{
                    {}\& $x^3$ \& {}  \& {} \& {}\\
                    {} \& {} \& {} \& {} \& $x^3$\\
                    {} \& {} \& {} \& {} \& {}\\
                    $x^3$ \& {} \& {} \& {} \& {}\\
                    {} \& {} \& {} \& $x^3$ \& {}\\
                };
                \draw (m-1-2) --  (m-2-3.center) ;
                \draw (m-2-5) -- (m-3-4.center) ;
                \draw (m-4-1) -- (m-3-2.center) ;
                \draw (m-5-4.center) -- (m-4-3.center) ;
                \draw (m-2-3.center) to[out=0,in=0, distance=0.85cm]  (m-4-3.center) ;
                \draw (m-2-3.center)  to[out=180,in=180, distance=0.85cm] (m-4-3.center) ;
                \node[align=center,black] at (0.55,0.6) {$x^1$};
                \node[align=center,black] at (-0.5,0.5) {$x^2$};
                \node[align=center,black] at (0.77,-0.5) {$x^2$};
                \node[align=center,black] at (-0.65,-0.8) {$x^1$};
            \end{tikzpicture}
        \end{equation}

\section{Conclusions and remarks}
\label{conclusions}
In this short note, we have investigated a toy model for a field theory with gauge symmetries. Our investigation is perturbative though we focused on the classical level, leaving considerations about the quantum corrections for the last part of this article.  Our setup is homological and in particular it is based on the Batalin--Vilkovisky formalism, which encompasses standard techniques of perturbative QFTs.
We have constructed a finite dimensional gauge theory over the $3$-sphere (group manifold associated to $\mathfrak{su}(2)$) in the classical BV formalism. We have also gone higher in the dimension, and built a finite dimensional gauge theory over $\mathbb{S}^3 \times \mathbb{S}^5$. Our Lie algebra forms take value in another Lie algebra. We showed that only the interaction vertex survives. Internal legs could exists: some propagators are non-zero. Unfortunately, by HPL (see formula \eqref{Dprime} for $D'$) the output of a $2$-product lies in cohomology, thus it cannot make up an internal leg, which instead \emph{always} lies in the complement to cohomology. That is thus null. We point out similarities to \cite{Nguyen:2021rsa}, which studies $\mathfrak{su}(2)$-forms but valued in spherical harmonics. In our notation, this means they consider $\Lambda\mathfrak{su}(2)^*[1]\otimes \mathfrak g$ with $\mathfrak g = \operatorname{End{V}}$ where $V$ is an irrep of $\mathfrak{su}(2)$. This allows them to add another term in the classical BV action, accounting for the action of $\mathfrak{su}(2)$ on $\mathfrak{g}$ i.e. they consider the Chevalley-Eilenberg complex with (Lie algebra) coefficients. In the notation of \eqref{eq:classicalBVaction}, the additional term of the action is
\begin{equation}\label{eq:ActionTerm} \epsilon^{ijk}\rho^{\alpha\beta}_{ k} \, x^i_\alpha x^j_\beta + \rho^{\alpha\beta}_{ i}\,u_\alpha y^i_\beta ,\end{equation}
where $ \rho^{\alpha\beta}_{ i}$ are the structure constants for the action of $\mathfrak{su}(2)$ on $\mathfrak{g}$. We note that the argument for vanishing of tree diagrams we use in Section \ref{gen-vanishing} also applies in their case, since the inclusion \cite[Eq.~3.61]{Nguyen:2021rsa} embeds the cohomology as an (abelian) Lie subalgebra. However, one cannot relate the results of \cite{Nguyen:2021rsa} with ours using the homological perturbation theory directly:  the additional term \eqref{eq:ActionTerm} is not ``small'' in the sense of HPL, since one cannot see the cohomology \cite[Eq.~3.60]{Nguyen:2021rsa} as coming from our cohomology by the way of an induced differential, due to degree reasons.

The message of this short article is quite simple, nevertheless because of the applied nature and the explicit formulas we hope that it could be useful to physicists working on scattering amplitudes, as well as algebraists interested in mathematical physics. A possible venue to explore for non-trivial products in cohomology are Heisenberg nilpotent groups (thanks P. Mn\"{e}v for explaining this to us).

\section*{Acknowledgments} We are grateful to Andrey Krutov for his valuable expertise in Lie theory and many helpful comments on a draft of this paper. We thank Pavol \v{S}evera, Bra\v{n}o Jur\v{c}o, Frico Valach, Domenico Fiorenza  for consultations and Pavel Mn\"{e}v and Alex Schenkel for reading and commenting on a draft of this paper. We are especially thankful to Pavel Mn\"ev for explaining to us the compatibility of $C_\infty$ and $L_\infty$ homotopy transfers. 
E.B.~acknowledges support by GA\v{C}R grant PIF-OUT 24-10634O and by COST action CaLISTA CA21109 supported by COST (European Cooperation in Science and Technology) for the opportunity to present part of this work in Corfu at the 2025 workshop. E.B.~warmly thanks the participants and the organizers for a productive meeting in a friendly environment. The work of J.P. was supported by the GAČR grant PIF-25-17640I. J.P. would like to thank M. Vorobel for discussions regarding the space of non-degenerate isotropic subspaces. E.B. and J.P. would both like to thank Institut Mittag Leffler where this work was started.

\end{document}